\numberwithin{equation}{section}
\numberwithin{figure}{section}
\newtheorem{lemma}{Lemma}[section]
\newtheorem{theorem}{Theorem}[section]
\newtheorem{proposition}{Proposition}[section]
\newtheorem{example}{Example}[section]
\newtheorem{corollary}[lemma]{Corollary}
\theoremstyle{definition}
\newtheorem{definition}{Definition}[section]
\newtheorem{remark}{Remark}[section]
\newcommand{\R}{\mathbb{R}}
\newcommand{\outdeg}{\operatorname{outdeg}}
\newcommand{\sumtwo}{\operatorname*{\sum\sum}}
\newcommand{\modu}{\operatorname{mod}}
\newcommand{\norm}[1]{\left\lVert#1\right\rVert}
\newcommand{\mycomment}[1]{}
\newcommand{\rr}{\m{R}}
\newcommand{\la}{\lambda}
\newcommand{\m}[1]{\mathbb{#1}}
\newcommand{\mc}[1]{\mathcal{#1}}
\newcommand{\ip}[1]{\langle #1 \rangle}
\begin{document}

\title[Systemic importance with missing data]{Estimating systemic importance with missing data in input-output graphs}

\author[Geneson]{Jesse Geneson}
\address{RAND Corporation, Engineering and Applied Sciences, Pittsburgh, PA}
\email{\texttt{jgeneson@rand.org}}

\author[Moon]{Alvin Moon}
\address{RAND Corporation, Engineering and Applied Sciences, Santa Monica, CA}
\email{\texttt{alvinm@rand.org}}

\author[Robles]{Nicolas Robles}
\address{RAND Corporation, Engineering and Applied Sciences, Arlington, VA}
\email{\texttt{nrobles@rand.org}}

\author[Strong]{Aaron Strong}
\address{RAND Corporation, Economics, Sociology \& Statistics, Santa Monica, CA}
\email{\texttt{astrong@rand.org}}

\author[Welburn]{Jonathan Welburn}
\address{RAND Corporation, Engineering and Applied Sciences, Santa Monica, CA}
\email{\texttt{jwelburn@rand.org}}

\subjclass[2020]{Primary: 15A09, 15A29. Secondary: 91B60, 91B74, 60C05. \\ \indent \textit{Keywords and phrases}: Input-output graphs, systemic importance, Cobb-Douglas productivity model, Leontief inverse, production network, influence vector, sharp bounds, directed chains, \texttt{PageRank}.}

\maketitle

\vspace{-10pt}

\begin{abstract}
    In the context of the Cobb-Douglas productivity model we consider the $N \times N$ input-output linkage matrix $W$ for a network of $N$ firms $f_1, f_2, \cdots, f_N$. The associated influence vector $v_w$ of $W$ is defined in terms of the Leontief inverse $L_W$ of $W$ as 
    \begin{align*}
    v_W = \frac{\alpha}{N} L_W \vec{\mathbf{1}} \quad \textnormal{where} \quad L_W = (I - (1-\alpha) W')^{-1},
    \end{align*}
    $W'$ denotes the transpose of $W$ and $I$ is the identity matrix. Here $\vec{\mathbf{1}}$ is the $N \times 1$ vector whose entries are all one. The influence vector is a metric of the importance for the firms in the production network. Under the realistic assumption that the data to compute the influence vector is incomplete, we prove bounds on the worst-case error for the influence vector that are sharp up to a constant factor. We also consider the situation where the missing data is binomially distributed and contextualize the bound on the influence vector accordingly. We also investigate how far off the influence vector can be when we
only have data on nodes and connections that are within distance $k$ of some source node.
    A comparison of our results is juxtaposed against \texttt{PageRank} analogues. We close with a discussion on a possible extension beyond Cobb-Douglas to the Constant Elasticity of Substitution model, as well as the possibility of considering other probability distributions for missing data.
\end{abstract}


\section{Introduction}
\subsection{Motivation from economics and policy} A growing literature has sought to understand the macro economy and the potential for systemic risk at the firm level through models of interfirm linkages.  However, due to fundamental challenges in gathering data on interfirm linkages and associated values, no complete dataset on global interfirm networks exists (Carvalho and Tahbaz-Salehi \cite{premier}).
For example, most countries do not require full disclosure of the set of customers and suppliers that a firm may have. If an important firm or linkage is missing, then this may have consequences when considering systemic risk or systemic importance of any single firm or connection. Barrot and Sauvagnat \cite{barrotSauvagnat2016} and Welburn \textit{et al.} \cite{welburn2020} have exploited Financial Accounting Standard 131, which states that firms must report suppliers that make up more than 10\% of costs. This standard only applies to publicly traded firms and provides a lower bound on what they must report. Some have looked to more complete country level data offered in Japan (e.g., Arata \cite{arata}, Carvalho, Nirei \textit{et al.} \cite{earthquake}), while some have utilized expanded datasets of firm level connections through proprietary vendors such as FactSet or Bloomberg (e.g. Wu and Birge \cite{wubirge}, Crosignani, Macchiavelli \textit{et al.} \cite{fed2021}).  Others have sought to address the challenge of missing data in the aforementioned datasets through inference techniques (e.g., Welburn, Strong \textit{et al.} \cite{welburn2020, welburn2023}). However, uncertainty still remains.

Sectoral relationships or country level trade networks may be relatively more complete but do not describe the disaggregated nature of firm level connections that may drive disruptions. The aim of much of this literature is to understand how disruptions of one firm are likely to affect the system. In a sector level approach, these disruptions may be mitigated simply due to the scale of the economy at that level and that individual firm disruptions may be small relative to the economy or sector. Thus, we may lose fidelity as to the consequences of a disruption that may percolate across an economy at the firm rather than sector level.

Economic models that focus on the firm level may provide insights that cannot be gained at the sector level but at the expense of calibration data that matches reality. Economic models that focus on the sector level may mitigate some of the realities of disruptions while having more representative interactions across the economy. This work aims to understand how large difference errors in calibration may percolate to systemic risks and measures of systemic importance. That is, we estimate error bounds on systemic importance as a function of the network and share of missing data. Our goal is ultimately to provide insights into how to model systemic risk, systemic importance, and the role of aggregate and disaggregated data. 

Knowing the sources of systemic risk is important in identifying potential resilience investment that can take place at the system rather than firm or sector level. By having a system view rather than a firm level view, there may be cost effective strategies that can be identified to reduce the impact of cascading disruptions that may not be readily identified by a firm level approach. Strategies that focus on the system may result in not only investments in security but redundancy, adaptation, and reducing disruption impacts rather than the probability that a disruption occurs or the direct impacts to a single firm.

\subsection{Motivation from economic theory}
Suppose we have an economy with a fixed number $n$ of competitive industries producing a distinct good. Such an economy is called static. The end result of each good can only be
\begin{itemize}
    \item either final consumption by households, 
    \item or intermediate output for other goods in a larger chain of interactions.
\end{itemize}
The most accepted model for such economies is the Cobb-Douglas model \cite{cobbdouglas}. In this model, each of the industries transforms intermediate inputs and labor into final products \cite[$\mathsection$2]{premier} and the underlying assumption is that of constant returns. To formalize this notion, suppose the output of entity $i$ can be factored into the product
\begin{align} \label{eq:cobbdouglas}
    y_i = z_i \zeta_i \ell_i^{\alpha_i} \prod_{j=1}^n x_{ij}^{a_{ij}},
\end{align}
where each factor is defined as follows:
\begin{enumerate}
    \item $\ell_i$ is the amount of labor hired by industry $i$,
    \item $x_{ij}$ is the quantity of good $j$ used for production of good $i$,
    \item $\alpha_i$ is the share of labor in industry $i$'s production technology (and hence $\alpha_i>0$),
    \item $z_i$ is a Hicks-neutral productivity shock \cite[$\mathsection$2.1]{premier},
    \item $\zeta_i = \zeta_i(\alpha_i, a_{ij})$ is a normalization constant. A benign choice for $\zeta_i$ is $\zeta_i = \alpha^{-\alpha_i} \prod_{j=1}^n a_{ij}^{-a_{ij}}$,
    \item the reliance factor $a_{ij}$ indicates that firms in an industry need to rely on goods produced by other industries as intermediate inputs for their own production. In all cases, we have $a_{ij} \ge 0$. If $a_{ij} = 0$, then good $j$ was not employed in the production of good $i$. If $a_{ij}$ is large, then good $j$ is a more important input when producing good $i$. Generally $a_{ij} \ne a_{ji}$ and $a_{ii}$ may be strictly greater than $0$.
\end{enumerate}
In order to have constant returns we must have the relation
\begin{align} \label{eq:constantreturn} 
    \alpha_i + \sum_{j=1}^n a_{ij} = 1 \quad \text{ for all } i \in \{1,2,\cdots,n\}.
\end{align}
The associated utility function is given by
\begin{align} \label{eq:utility} 
u(c_1, \cdots, c_n) = \sum_{i=1}^n \beta_i \log \frac{c_i}{\beta_i},
\end{align}
where $c_i$ is the amount of good $i$ consumed and $\beta_i \ge 0$ measures various goods' shares in the household's utility function, normalized such that $\sum_{i=1}^n \beta_i = 1$. The utility function \eqref{eq:utility} along with \eqref{eq:cobbdouglas} give a full description of the economic landscape.

The connection to macroeconomic phenomena is made through an industry's Domar weight. This is defined as that industry's sales as a fraction of the national GDP
\begin{align} \label{eq:domar}
    \lambda_i = \frac{p_i y_i}{\operatorname{GDP}},
\end{align}
where $p_i$ is the price of good $i$ and $y_i$ is industry $i$'s output. Profits are defined by
\begin{align}
    \pi_i = p_i y_i - w l_i  - \sum_{j=1}^n p_j x_{ij},
\end{align}
where $w$ is the wage. This implies that, up to first order, the conditions corresponding to firms in industry $i$ are given by $x_{ij} = a_{ij}p_iy_i/p_j$ and $l_i = \alpha_i p_i y_i/w$. If we now employ these in the Cobb-Douglas model then 
\begin{align}
    \log \frac{p_i}{w} = \sum_{j=1}^n a_{ij} \log \frac{p_j}{w} - \log z_i.
\end{align}
Typically one sets the notation $\varepsilon_i = \log z_i$ to denote the log-productivity shock to firms in industry $i$. This naturally leads to a system of equations and it is most easily handled in matrix notation as
\begin{align}
    \hat {\vec p} = {A} \hat {\vec p} - {\vec \varepsilon},
\end{align}
where ${A}$ is the economy's input-output matrix. This can be solved formally to yield
\begin{align} \label{eq:Leontieffinverse}
    \hat {\vec p} = - ({I} - {A})^{-1}{\vec \varepsilon}.
\end{align}
This is the equilibrium vector of log-relative prices. 

\subsection{Input-output linkage matrix, Leontief inverse, and influence vector}

Given a network of $N$ firms $f_1, f_2, \dots, f_N$, we define their \textit{input-output linkage matrix} $W \in M_N(\R)$ to be the $N \times N$ matrix for which entry $w_{i j}$ represents the share of intermediate inputs to firm $f_i$ which are from firm $f_j$. Specifically, if firm $f_i$ has some amount $\text{II}_i$ of intermediate input use, then the amount of intermediate input to firm $f_i$ from firm $f_j$ is equal to $w_{i j} \text{II}_i$. We assume that $w_{i i} = 0$ for all $i$. 

For any matrix $W$, we let $W'$ denote the transpose of $W$. Moreover, we let $I_n$ denote the $n \times n$ identity matrix, and we drop the subscript $n$ when the dimensions are clear from the context. Let $\alpha \in (0, 1)$ be the parameter representing the share of labor in the network of firms. We assume that $\sum_{j = 1}^{n} w_{i j} = 1$ for all $i = 1, \dots, n$. Thus, $\sum_{i = 1}^{n} w'_{i j} = 1$ for all $j = 1, \dots, n$.

\begin{definition} \label{def:leontief}
Suppose that $\alpha \in (0,1)$. The \textit{Leontief inverse} $L_{W}$ of $W \in M_N(\rr)$ is 
\begin{align}
L_W = (I - (1-\alpha) W')^{-1}
\end{align}
whenever $(I-(1-\alpha) W')$ is invertible. The \textit{influence vector} of $W$ is 
\begin{align}
v_W = \frac{\alpha}{N} L_W \vec{\mathbf{1}},
\end{align} 
where $\vec{\mathbf{1}}$ is the $N\times 1$ vector whose entries are all $1$s. 
\end{definition} 

The $(i,j)$ entry of $L_W$, denoted by $\ell_{ij}$, measures the importance of industry $j$ as a direct and indirect supplier to industry $i$ in the economy \cite[p. 639]{premier}. The Leontief inverse also encapsulates how sectoral productivity shocks propagate downstream to other sectors through the input-output matrix. Thus, the aggregate output depends on the intersectoral network of the economy through the Leontief inverse \cite[p. 1985]{acemoglu}.

The influence vector is a measure of importance for the firms in the production network, i.e., firm $i$ is more important than firm $j$ if $v_{W,i} > v_{W,j}$. In the context of networks of websites, the influence vector is the same as the \texttt{PageRank} vector used in Google search rankings \cite{pagerank}. 

\begin{example}
Let us consider the following graph $G$
\begin{figure}[h]
    \centering
    \includegraphics[scale=1]{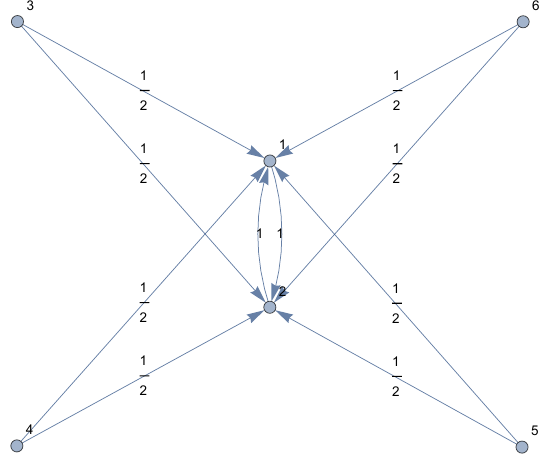}
    \caption{Graph $G$ with adjacency matrix \eqref{eq:adjancecymatrixG}}
    \label{fig:enter-label}
\end{figure}
with adjacency matrix
\begin{align} \label{eq:adjancecymatrixG}
G_{ij} =
\left( {\begin{array}{*{20}{c}}
  0&1&0&0&0&0 \\ 
  1&0&0&0&0&0 \\ 
  1&1&0&0&0&0 \\ 
  1&1&0&0&0&0 \\ 
  1&1&0&0&0&0 \\ 
  1&1&0&0&0&0 
\end{array}} \right),
\end{align}
and weights 
\begin{align}
    w_{12}&=w_{21}=1, \nonumber \\
    w_{3,1}&=w_{3,2}=w_{4,1}=w_{4,2}=w_{5,1}=w_{5,2}=w_{6,1}=w_{6,2}=\frac{1}{2}.
\end{align}
In this case the influence vector is given by $$\bigg(\frac{1}{2} - \frac{\alpha}{3}, \frac{1}{2} - \frac{\alpha}{3}, \frac{\alpha}{6}, \frac{\alpha}{6}, \frac{\alpha}{6}, \frac{\alpha}{6}\bigg).$$
\end{example}

\subsection{Past results}

Acemoglu \textit{et al.} \cite{acemoglu} investigated the influence vector and other measures of importance in the context of networks of economic sectors. They showed that when $W$ is the input-output linkage matrix for the sectors of the economy, the $i^{\text{th}}$ coordinate of the influence vector $v_W$ is equal to the equilibrium share of sales of sector $i$. They also showed that the aggregate volatility of the network scales with the Euclidean norm of the influence vector.

A number of error bounds have been proved for the influence vector in the context of \texttt{PageRank}. Ipsen and Wills \cite{ipsen} proved an exponential bound on the convergence rate of the \texttt{PageRank} vector when it is computed using the power method. They also presented error bounds on the influence vector when the personalization vector (the all-ones vector in the formula for the influence vector) is perturbed, when $\alpha$ is perturbed, and when the transition matrix is perturbed. In particular, we use the following bound from their paper.

\begin{theorem}\label{thm:ml_pr}\cite{ipsen}
    If $W$ and $U$ are input-output matrices with $W = U+B$ and $\alpha \in (0, 1)$, then \[\norm{v_U-v_W}_1 \le \frac{(1-\alpha)\norm{B}_{\infty}}{\alpha}.\] 
\end{theorem}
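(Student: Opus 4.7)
The plan is to reduce the problem to a resolvent identity and then exploit the fact that row-stochasticity of $W$ and $U$ pins down the column sums of $L_W$ and $L_U$ exactly. Setting $A := I - (1-\alpha)U'$ and $C := I - (1-\alpha)W'$, so that $L_U = A^{-1}$, $L_W = C^{-1}$, and $C - A = -(1-\alpha)B'$, I would first apply the standard identity $A^{-1} - C^{-1} = A^{-1}(C - A)C^{-1}$ to obtain
\begin{align*}
L_U - L_W = -(1-\alpha)\, L_U B' L_W.
\end{align*}
Applying both sides to $\tfrac{\alpha}{N}\vec{\mathbf{1}}$ and recognising $\tfrac{\alpha}{N}L_W \vec{\mathbf{1}} = v_W$ collapses this to the cleaner identity
\begin{align*}
v_U - v_W = -(1-\alpha)\, L_U B' v_W,
\end{align*}
after which I would take $\ell^1$ norms and apply the submultiplicative bound $\|L_U B' v_W\|_1 \le \|L_U\|_1 \|B'\|_1 \|v_W\|_1$ for the induced $1$-norm.

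The remaining task is to evaluate the three factors. Since the induced $1$-norm is the maximum column sum and the induced $\infty$-norm is the maximum row sum, transposition gives $\|B'\|_1 = \|B\|_\infty$ for free. For the other two factors I would exploit row-stochasticity: $W\vec{\mathbf{1}} = \vec{\mathbf{1}}$ transposes to $\vec{\mathbf{1}}' W' = \vec{\mathbf{1}}'$, so $\vec{\mathbf{1}}'(I-(1-\alpha)W') = \alpha \vec{\mathbf{1}}'$, and right-multiplying by $L_W$ yields $\vec{\mathbf{1}}' L_W = \tfrac{1}{\alpha}\vec{\mathbf{1}}'$. The Neumann expansion $L_W = \sum_{k\ge 0} (1-\alpha)^k (W')^k$ (which converges because the spectral radius of $(1-\alpha)W'$ is at most $1-\alpha<1$) shows $L_W$ has non-negative entries, so the column-sum identity lifts to $\|L_W\|_1 = \tfrac{1}{\alpha}$ and, combined with $v_W \ge 0$, gives $\|v_W\|_1 = \vec{\mathbf{1}}' v_W = \tfrac{\alpha}{N}\vec{\mathbf{1}}' L_W \vec{\mathbf{1}} = 1$. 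The identical argument applied to $U$ yields $\|L_U\|_1 = \tfrac{1}{\alpha}$. Plugging these values into the submultiplicative bound gives
\begin{align*}
\|v_U - v_W\|_1 \le (1-\alpha) \cdot \tfrac{1}{\alpha} \cdot \|B\|_\infty \cdot 1 = \tfrac{(1-\alpha)\|B\|_\infty}{\alpha},
\end{align*}
which is the claim.

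The only real friction is bookkeeping with the transposes: because the Leontief inverse is built from $W'$ rather than $W$, the row-stochasticity hypothesis on $W$ must be pushed through a transpose before it controls the column sums of $L_W$, and symmetrically the perturbation $B$ enters the identity as $B'$, which is exactly what turns $\|B\|_\infty$ into the natural norm in the conclusion. No other step is deeper than applying subordinate norm inequalities, so nothing else should obstruct the argument.
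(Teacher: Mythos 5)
The paper does not prove this statement itself --- it is imported directly from Ipsen--Wills \cite{ipsen} --- so there is no in-paper argument to compare against. Your proof is correct and is essentially the standard argument behind the cited result: the resolvent identity $L_U - L_W = -(1-\alpha)L_U B' L_W$, combined with the exact evaluations $\norm{B'}_1 = \norm{B}_\infty$, $\norm{L_U}_1 = 1/\alpha$, and $\norm{v_W}_1 = 1$ (each of which you justify correctly via row-stochasticity and the nonnegativity of the Neumann series), gives the bound. The only hypothesis you use implicitly --- that $U$ is itself row-stochastic with nonnegative entries, so that $\norm{L_U}_1 = 1/\alpha$ --- is supplied by the assumption that $U$ is an input-output matrix, so nothing is missing.
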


A number of variants of \texttt{PageRank} have been investigated in the literature. Bahmani \textit{et al.} \cite{bahmani} analyzed the \texttt{PageRank} vector of a dynamic network. They developed an algorithm for estimating the \texttt{PageRank} vector at each time step, and they showed that their estimate is close in expectation to the true \texttt{PageRank} vector. 

Gleich \textit{et al.} \cite{gleich} introduced the multilinear \texttt{PageRank} vector, which generalizes the original \texttt{PageRank} vector and does not always have a unique solution. They developed convergence bounds for various methods of computing the multilinear \texttt{PageRank} vector, and they also determined parameter regimes for which the multilinear \texttt{PageRank} vector has a unique solution. Li \textit{et al.} \cite{ml_pagerank} found additional parameter regimes where the multilinear \texttt{PageRank} vector is uniquely determined, and they also proved perturbation bounds for the multilinear \texttt{PageRank} vector within those parameter regimes. 

\subsection{Our results} 

In this paper, we prove bounds on the error in the influence vector in the presence of missing data. In particular, we show that if at most a $\delta$ share of the data for the intermediate input use of each firm is missing, then the worst-case error in the influence vector is on the order of $\delta$. We obtain similar bounds for networks of firms and goods.

We also investigate error bounds for the influence vector when there is only data for a specified neighborhood of the network. In particular, we obtain error bounds for chain-like networks of firms when data is only available for an initial segment of the chain. Furthermore, we discuss applications of our error bounds to \texttt{PageRank}.

We introduce notation and definitions in Section~\ref{s:def} that are used throughout the paper. In Section~\ref{s:error_bounds}, we prove an upper bound of $O(\delta)$ on the worst-case error for the influence vector when at most a $\delta$ share of the data for the intermediate input use of each firm is missing. We prove that the upper bound is sharp up to a constant factor in the same section. In order to do so, we construct for each $n > 2$ a network of $n$ firms that is missing data for at most a $\delta$ share of the intermediate input use of each firm, and we show that the error in the influence vector for the constructed network is $\Omega(\delta)$.

In Section~\ref{s:extremal}, we derive sharp extremal bounds on the coefficients of the influence vector. In particular, we exactly determine the maximum and minimum possible coefficients of the influence vector in a network of $n$ firms. In Section~\ref{s:negative}, we consider a scenario where the missing data accounts for at most a $\delta$ share of the firms in the network, rather than a $\delta$ share of the intermediate input use of each firm. We show that the worst-case error in this scenario does not go to $0$ as $\delta$ goes to $0$. In Section~\ref{s:binomial}, we derive an error bound for the influence vector (with high probability) when the missing data has a binomial distribution.

In Section~\ref{s:refined}, we define networks of firms and goods, and we prove that the worst-case error for the influence vectors of any such network is $\Theta(\delta)$ when at most a $\delta$ share of the data for the intermediate input use of each firm is missing. We prove error bounds for chain-like networks of firms in Section~\ref{s:chains}. In Section~\ref{s:missdegrees}, we investigate what happens when we only have data on nodes and connections that are within distance $k$ of some source node. In Section~\ref{s:pagerank}, we discuss applications of our error bounds to \texttt{PageRank}. Finally in Section~\ref{s:future}, we discuss future directions for research on error bounds for importance measures of networks in the presence of missing data.

\section{Definitions and notation}\label{s:def}

The set $M_N(\R)$ consists of all $N \times N$ matrices with real entries. The \textit{general linear group} $\textnormal{GL}_N(\rr) \subseteq M_N(\R)$ consists of all invertible matrices in $M_N(\R)$. For any vector $v = (v_1, v_2, \dots, v_n) \in \mathbb{R}^n,$ the $p$-norm of $v$ is defined as 
\[
\norm{v}_p = \bigg(\sum_{i = 1}^n |v_i|^p \bigg)^{1/p}.
\] 
For all $p \ge 1$, the $p$-norm of a fixed vector $v$ is decreasing with respect to $p$. For any matrix $A: \mathbb{R}^m \rightarrow \mathbb{R}^n$ and $1 \le p \le \infty$, the $p$-norm of $A$ is defined as 
\[
\norm{A}_p = \max_{v \neq 0} \frac{\norm{A v}_p}{\norm{v}_p}.
\] 
We note that the $p$-norm is submultiplicative, i.e. $\norm{AB}_p \leq \norm{A}_p \norm{B}_p.$


For $\delta \in (0,1)$ and $f, g: (0,1) \rightarrow \mathbb{R}$, we say that $f(\delta) = O(g(\delta))$ if there exist constants $D, K > 0$ such that $f(\delta) \le K g(\delta)$ for all $\delta \in (0, D)$. Similarly, we say that $f(\delta) = \Omega(g(\delta))$ if there exist constants $D, K > 0$ such that $f(\delta) \ge K g(\delta)$ for all $\delta \in (0, D)$. Furthermore, we say that $f(\delta) = \Theta(g(\delta))$ if both $f(\delta) = O(g(\delta))$ and $f(\delta) = \Omega(g(\delta))$. 

For $f, g: \mathbb{R}^{+} \rightarrow \mathbb{R}^{+}$, we say that $f(n) = o(g(n))$ if \[\lim_{n \rightarrow \infty} \frac{f(n)}{g(n)} = 0.\] Similarly, we say that $f(n) = \omega(g(n))$ if \[\lim_{n \rightarrow \infty} \frac{f(n)}{g(n)} = \infty.\] 


In this paper, we assume that the data to compute the influence vector is incomplete, i.e., we only observe some of the data on the intermediate input use of firm $f_i$, and the remainder of the data is considered to be \textit{missing}. We prove bounds on the worst-case error for the influence vector in the presence of missing data. When we say that \textit{the missing data accounts for at most a $\delta$ share of the total intermediate input use of firm $f_i$}, we mean that firm $f_i$ has some amount $\text{II}_i$ of intermediate input use (including both observed and missing data), and at most $\delta \text{II}_i$ of the intermediate input use of firm $f_i$ is missing (not observed), while at least $(1-\delta) \text{II}_i$ of the intermediate input use of firm $f_i$ is observed.

\section{Estimating the influence vector with missing data}\label{s:error_bounds}

In the next theorem, we prove an upper bound on the error in the influence vector in terms of the share of missing data for each firm in the network. 

\begin{theorem}\label{maxdeltacor}
Let $W$ be the true matrix of input-output linkages with all of the data, and let $U$ be the observed matrix of the input-output linkages with data missing. Suppose that for each $i = 1, \dots, n$, the missing data accounts for at most a $\delta$ share of the total intermediate input use of firm $i$, where $0 \le \delta < 1$. If $v_U$ denotes the influence vector calculated with the observed data and $v_W$ denotes the true influence vector, then 
\[
\norm{v_U - v_W}_1 \le \frac{\delta}{\alpha}\frac{(1-\alpha)(2-\delta)}{1-\delta}.
\]
\end{theorem}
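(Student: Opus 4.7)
The plan is to reduce the theorem to an application of the Ipsen--Wills perturbation bound in Theorem \ref{thm:ml_pr}. To do this I first need a precise description of the observed matrix $U$. For each firm $i$ let $\delta_i \in [0, \delta]$ denote its actual missing share, and write $S_i$, $M_i$ for the indices of the observed and missing suppliers of $f_i$, so that $\sum_{j \in M_i} w_{ij} = \delta_i$ and $\sum_{j \in S_i} w_{ij} = 1 - \delta_i$. The natural reconstruction from partial data is to renormalize the observed shares so that they sum to $1$: set $u_{ij} = w_{ij}/(1 - \delta_i)$ for $j \in S_i$ and $u_{ij} = 0$ for $j \in M_i$. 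Each row of $U$ then sums to $1$, so $U$ is a genuine input-output matrix and Theorem \ref{thm:ml_pr} applies directly to the pair $(W, U)$ with $B = W - U$.

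It then remains to bound $\norm{B}_\infty$, the maximum absolute row sum of $W - U$. For any row $i$,
\[
\sum_{j = 1}^{n} |w_{ij} - u_{ij}| = \sum_{j \in S_i} w_{ij}\Bigl(\frac{1}{1-\delta_i} - 1\Bigr) + \sum_{j \in M_i} w_{ij} = \frac{\delta_i}{1-\delta_i}\sum_{j \in S_i} w_{ij} + \delta_i.
\]
The key step is to estimate this not by the exact value $2\delta_i$, but by a row-independent quantity depending only on the global bound $\delta$. Since the map $x \mapsto x/(1-x)$ is increasing on $[0,1)$, one has $\delta_i/(1-\delta_i) \le \delta/(1-\delta)$; combined with the trivial bound $\sum_{j \in S_i} w_{ij} \le \sum_{j=1}^n w_{ij} = 1$ and $\delta_i \le \delta$, this yields the uniform estimate $\delta/(1-\delta) + \delta = \delta(2-\delta)/(1-\delta)$. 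Taking the maximum over $i$ gives $\norm{B}_\infty \le \delta(2-\delta)/(1-\delta)$, and substituting into Theorem \ref{thm:ml_pr} produces exactly the stated bound.

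The only real subtlety is in the setup: $U$ must be reconstructed so that it is row-stochastic (so Theorem \ref{thm:ml_pr} is applicable out of the box), and one must deliberately use the loose inequality $\sum_{j \in S_i} w_{ij} \le 1$ in place of the exact identity $\sum_{j \in S_i} w_{ij} = 1 - \delta_i$, so that the final bound depends only on the global parameter $\delta$ and not on the row-specific $\delta_i$. Once this bookkeeping is in place, the proof amounts to a one-line row-sum estimate followed by an appeal to the Ipsen--Wills bound.
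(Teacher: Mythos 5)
Your overall strategy is exactly the paper's: reconstruct $U$ by renormalizing the observed shares, bound the maximum row sum $\norm{W-U}_\infty$ directly, and feed that into the Ipsen--Wills bound (Theorem \ref{thm:ml_pr}). The algebra you do within your model is correct, and the deliberate loosening $\sum_{j\in S_i} w_{ij}\le 1$ does reproduce the stated constant $\delta(2-\delta)/(1-\delta)$.

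There is, however, one genuine gap in the setup: you assume each supplier of firm $i$ is either \emph{fully} observed ($j\in S_i$) or \emph{fully} missing ($j\in M_i$). The theorem's hypothesis — and the paper's own proof — allow a given link to be \emph{partially} observed: the paper introduces $c_{ij}\in[0,w_{ij}]$ for the missing portion of link $(i,j)$, with $\sum_j c_{ij}=d_i$, so that $u_{ij}=(w_{ij}-c_{ij})/(1-d_i)$ and the sign of $w_{ij}-u_{ij}$ varies entry by entry. (This generality is actually used later, e.g.\ in Section \ref{s:binomial}, where individual dollars of a single link are missing independently.) Your two-term decomposition of the row sum into $\sum_{j\in S_i}$ and $\sum_{j\in M_i}$ does not cover this case. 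The fix is short: for each $j$ write
\begin{equation*}
|w_{ij}-u_{ij}| = \Bigl|(w_{ij}-c_{ij})-\tfrac{w_{ij}-c_{ij}}{1-d_i}+c_{ij}\Bigr| \le (w_{ij}-c_{ij})\tfrac{d_i}{1-d_i}+c_{ij},
\end{equation*}
and sum over $j$ to get $\sum_j |w_{ij}-u_{ij}|\le \tfrac{d_i}{1-d_i}(1-d_i)+d_i=2d_i\le 2\delta \le \delta(2-\delta)/(1-\delta)$, after which the rest of your argument goes through verbatim. (The paper instead splits the row sum into its positive and negative parts, bounding them by $d_i$ and $d_i/(1-d_i)$ respectively; either route works, and both in fact establish the slightly stronger bound $2\delta(1-\alpha)/\alpha$ that the theorem does not claim.)
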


\begin{proof}
As in Theorem~\ref{thm:ml_pr}, suppose that $B = W-U$. We first derive an upper bound on $\sum_{j=1}^n |u_{i j} - w_{i j}|$. Observe that if the missing data accounts for a $d_i$ share of the goods in the total intermediate input use of firm $i$ for some $d_i \le \delta$, and a $c_{i j}$ share of the missing input data for firm $i$ is from the output of firm $j$, then we have 
\[
w_{i j} - u_{i j} = w_{i j} - \frac{w_{i j}-c_{i j}}{1-d_i} = \frac{c_{i j} - d_i w_{i j}}{1-d_i}.
\] 
Moreover, note that $c_{i j} \le w_{i j}$. Thus, \[\sum_{j=1}^n \max(w_{i j} - u_{i j},0) \le \sum_{j = 1}^n \frac{c_{i j}-d_i c_{i j}}{1-d_i} = \sum_{j = 1}^n c_{i j} = d_i.\] Furthermore, \[\sum_{j=1}^n \min(w_{i j} - u_{i j},0) \ge \sum_{j=1}^n \frac{- d_i w_{i j}}{1-d_i} = -\frac{d_i}{1-d_i}.\] Therefore
\[
\sum_{j=1}^n |u_{i j} - w_{i j}| \le d_i+\frac{d_i}{1-d_i} = \frac{2d_i-d_i^2}{1-d_i} \le \frac{2\delta-\delta^2}{1-\delta}.
\]
Thus, $\norm{B}_{\infty} \le \frac{2\delta-\delta^2}{1-\delta}$. Plugging this into Theorem~\ref{thm:ml_pr} implies 
\[
\norm{v_U - v_W}_1 \le \frac{(1-\alpha)(2\delta-\delta^2)}{\alpha (1-\delta)},
\]
as it was to be shown.
\end{proof}

\begin{corollary} Under the same conditions as Theorem~\textnormal{\ref{maxdeltacor}}, we have \[\norm{v_U - v_W}_p \le \frac{(1-\alpha)(2\delta-\delta^2)}{\alpha(1-\delta)}\] for all $p\ge 1$.
\end{corollary}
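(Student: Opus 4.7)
The plan is to chain two standard facts: the monotonicity of $p$-norms in $p$ (stated in Section~\ref{s:def}) and the $1$-norm bound from Theorem~\ref{maxdeltacor}. Specifically, the excerpt notes that for a fixed vector $v \in \mathbb{R}^n$, the map $p \mapsto \norm{v}_p$ is decreasing on $[1,\infty)$. Applying this to the difference $v_U - v_W$ gives $\norm{v_U - v_W}_p \le \norm{v_U - v_W}_1$ for every $p \ge 1$.

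Now I would simply combine this with the conclusion of Theorem~\ref{maxdeltacor}, which under the same hypotheses asserts
\begin{align*}
\norm{v_U - v_W}_1 \le \frac{(1-\alpha)(2\delta - \delta^2)}{\alpha(1-\delta)}.
\end{align*}
Concatenating the two inequalities yields the claimed bound
\begin{align*}
\norm{v_U - v_W}_p \le \norm{v_U - v_W}_1 \le \frac{(1-\alpha)(2\delta - \delta^2)}{\alpha(1-\delta)}
\end{align*}
for all $p \ge 1$.

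There is no real obstacle here; the corollary is essentially a one-line consequence of a well-known norm-inequality on $\mathbb{R}^n$ together with the theorem just proved. The only thing worth double-checking is that the monotonicity statement invoked is the one already recorded in Section~\ref{s:def} (and not the analytically distinct $\ell^p$-space inclusion going the other way for sequences with finite support versus the unit-ball ordering). Since Section~\ref{s:def} states the monotonicity in exactly the form needed, the proof is immediate and can be written as a two-line display.
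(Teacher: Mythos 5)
Your proposal is correct and is exactly the paper's own argument: the authors also derive the corollary by combining the $1$-norm bound of Theorem~\ref{maxdeltacor} with the monotonicity of $\norm{\cdot}_p$ in $p$ recorded in Section~\ref{s:def}. Nothing further is needed.
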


\begin{proof}
This follows from the fact that the $p$-norm is decreasing with respect to $p$.
\end{proof}

Next, we show that it is possible to construct an input-output network where the missing data accounts for at most a $\delta$ share of the goods in the total intermediate input use of each firm and the error in the influence vector is $\Omega(\delta)$, where the constants in the bound depend on $\alpha$. In the following result, $W$ represents an input-output matrix of $n$ firms with full data, and $U$ represents a corresponding input-output matrix of the same $n$ firms where at most a $\delta$ share of the goods in the total intermediate input use of each firm is missing. We let $v_W$ denote the influence vector of $W$ and $v_U$ denote the influence vector of $U$.

In order to prove the next result, we use the fact that the influence vector is the steady state of a random walk on the network of firms where the walker flips a coin which lands on heads with probability $1-\alpha$ and tails with probability $\alpha$. If the walker is at firm $i$ and the coin lands on heads, then they walk to firm $j$ with probability $w_{i j}$. If the coin lands on tails, then they walk to any firm in the network uniformly with probability $\frac{1}{n}$, including their current location. This steady state definition enables us to obtain a system of equations and solve for the influence vector.

\begin{theorem}\label{lower_bound_delta}
    For all $n > 2$ and all $p \ge 1$, it is possible to construct $W$ and $U$ such that $\norm{v_W - v_U}_p = \Omega(\delta)$, where the constants in the bound depend on $\alpha$.
\end{theorem}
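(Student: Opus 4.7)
The plan is to construct, for each $n > 2$, an explicit pair $W, U$ in which the missing-data perturbation concentrates between two ``central'' firms. I would take $w_{12} = w_{21} = 1$ and, for every peripheral firm $i \geq 3$, set $w_{i1} = 1-\delta$ and $w_{i2} = \delta$ (all other entries zero). The observed matrix $U$ is then obtained by declaring the $\delta$-share of firm $i$'s intermediate input that comes from firm $2$ to be missing for each $i \geq 3$, so that after renormalizing rows to sum to one we have $u_{i1} = 1$, $u_{i2} = 0$, while rows $1$ and $2$ are unchanged. By construction the missing data accounts for exactly a $\delta$-share of each peripheral firm's intermediate input (and zero for firms $1, 2$), verifying the hypothesis.

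To evaluate $v_W$ and $v_U$ I would use the stationary-distribution characterization $v_j = \alpha/n + (1-\alpha)\sum_i w_{ij} v_i$ described in the paragraph preceding the theorem. Since no firm uses any peripheral firm as an input under either $W$ or $U$, every peripheral firm has influence exactly $\alpha/n$ in both vectors. By the symmetry among firms $3, \dots, n$, the vectors $v_W$ and $v_U$ are therefore each determined by just two scalars at firms $1, 2$, say $(a, b)$ for $v_W$ and $(a', b')$ for $v_U$. Substituting $c = c' = \alpha/n$ into the stationary equations at firms $1, 2$ under $W$ and subtracting collapses everything to $(2-\alpha)(a-b) = (1-\alpha)(n-2)(1-2\delta)(\alpha/n)$; the analogous computation for $U$ gives $(2-\alpha)(a'-b') = (1-\alpha)(n-2)(\alpha/n)$. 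Hence
\[
(a'-b') - (a-b) = \frac{2\delta(1-\alpha)(n-2)\alpha}{n(2-\alpha)},
\]
which is linear in $\delta$ with a positive coefficient depending on $n$ and $\alpha$.

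To pass to the $p$-norm I would use $\norm{x}_p \geq \norm{x}_\infty$ for $p \geq 1$ together with $\max(|x|,|y|) \geq \tfrac{1}{2}|x-y|$, which yields
\[
\norm{v_W - v_U}_p \geq \tfrac{1}{2}\bigl|(a-b) - (a'-b')\bigr| = \frac{\delta(1-\alpha)(n-2)\alpha}{n(2-\alpha)} = \Omega(\delta),
\]
with hidden constant depending on $\alpha$ (and harmlessly on $n$, since $(n-2)/n \geq 1/3$ for $n \geq 3$). The main obstacle is choosing the missing data so that renormalization produces a genuinely $\delta$-linear (rather than $\delta^2$ or cancelling) shift of influence; the asymmetric $(1-\delta, \delta)$ split of peripheral inputs between two otherwise-equivalent central firms is precisely what causes the reallocated mass to accumulate coherently on firm $1$ at the expense of firm $2$, and the symmetry among peripheral firms is what reduces the $n \times n$ linear system to a tractable two-variable subtraction.
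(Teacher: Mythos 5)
Your proposal is correct and takes essentially the same approach as the paper: an asymmetric $(1-\delta,\delta)$ split of each peripheral firm's inputs between two central firms, analyzed via the steady-state equations and the symmetry of the peripheral nodes (indeed, your construction with $w_{12}=w_{21}=1$ is a variant the authors themselves consider; the published proof instead spreads rows $1$ and $2$ uniformly as $w_{1i}=w_{2i}=\tfrac{1}{n-1}$ and solves the full $3\times 3$ system). All steps check out, including the renormalization $u_{i1}=\tfrac{1-\delta}{1-\delta}=1$, the identity $(2-\alpha)(a-b)=(1-\alpha)(n-2)(1-2\delta)\tfrac{\alpha}{n}$, and the passage $\norm{\cdot}_p\ge\norm{\cdot}_\infty\ge\tfrac12|(a-a')-(b-b')|$. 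The only substantive difference is quantitative: your construction yields a lower-bound constant of order $\alpha(1-\alpha)/(2-\alpha)$ as $n\to\infty$, whereas the paper's choice of rows $1,2$ gives order $(1-\alpha)/(2-\alpha)$, which is larger by a factor of $1/\alpha$ and hence leaves a smaller gap to the $O\bigl(\tfrac{1-\alpha}{\alpha}\delta\bigr)$ upper bound discussed after the theorem.
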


\begin{proof}
    Let $W$ be the $n \times n$ matrix where $w_{1 i} = \frac{1}{n-1}$ for $i \ne 1$, $w_{2 i} = \frac{1}{n-1}$ for $i \ne 2$, $w_{j 1} = 1-\delta$ for $j > 2$, $w_{j 2} = \delta$ for $j > 2$, and all other entries are $0$. Let $U$ be the $n \times n$ matrix obtained from $W$ by setting $\delta$ to $0$. In other words, $u_{1 i} = \frac{1}{n-1}$ for $i \ne 1$, $u_{2 i} = \frac{1}{n-1}$ for $i \ne 2$, $u_{j 1} = 1$ for $j > 2$ and all other entries are $0$.

    Let $x$ be the first coefficient of the influence vector for $W$ and let $y$ be the second coefficient. By symmetry all other coefficients are equal and are denoted by $z$. 
    Using the fact that the influence vector is a steady state of the random walk described before this proof yields the following system of equations
    \begin{align}
    x &= (1-\alpha) \left((1-\delta) (n-2) z+\frac{y}{n-1}\right)+\frac{\alpha}{n}, \nonumber \\
   y &=(1-\alpha) \left(\delta (n-2) z+\frac{x}{n-1}\right)+\frac{\alpha}{n},, \nonumber \\
   z&=(1-\alpha) \left(\frac{x}{n-1}+\frac{y}{n-1}\right)+\frac{\alpha}{n}.
    \end{align}
    Let $a$ and $b$ be the first and second coefficients of the influence vector of $U$. Likewise, by symmetry all the other coefficients are the same and denoted by $c$. The same reasoning leads us to
    \begin{align}
    a &= (1-\alpha) \left((n-2) z+\frac{y}{n-1}\right)+\frac{\alpha}{n}, \nonumber \\
   b &=(1-\alpha) \frac{x}{n-1}+\frac{\alpha}{n}, \nonumber \\
   c&=(1-\alpha) \left(\frac{x}{n-1}+\frac{y}{n-1}\right)+\frac{\alpha}{n} .
    \end{align}
The solution to the first system is given by
\begin{align}
    x=x_n(\delta) &= \frac{(n-1) (n-1-\alpha (1-\delta) (n-2)-\delta (n-2))}{n (2n-3-\alpha (n-2))}, \nonumber \\
    y=y_n(\delta) &= \frac{(n-1) ((1-\alpha) \delta (n-2)+1)}{n (2n-3-\alpha (n-2))}, \nonumber \\
    z=z_n(\delta) &= \frac{n-\alpha}{n (2n-3-\alpha (n-2))}.
\end{align}
Note that $a = x_n(0)$, $b = y_n(0)$, and $c = z_n(0)$. Thus, we see that
\begin{align}
    x-a &= -\frac{(1-\alpha) \delta (n-2) (n-1)}{n (2n-3-\alpha (n-2))}, \nonumber \\
    y-b &= \frac{(1-\alpha) \delta (n-2) (n-1)}{n (2n-3-\alpha (n-2))}.
\end{align}
As $n \to \infty$, we can write
\begin{align}
    x_n(\delta) &= \frac{1-\alpha}{2-\alpha}+\frac{-\alpha^2+3 \alpha-1}{(2-\alpha)^2
   n}+O\bigg(\frac{1}{n^2}\bigg), \nonumber \\
   y_n(\delta) &= \frac{1}{(2-\alpha) n}+\frac{1-\alpha}{(2-\alpha)^2
   n^2}+O\bigg(\frac{1}{n^3}\bigg).
\end{align}
This in turn allows us to see that
\begin{align}
    \lim_{n \to \infty} (x-a) = -\frac{1-\alpha}{2-\alpha}\delta \quad \textnormal{and} \quad \lim_{n \to \infty} (y-b) = \frac{1-\alpha}{2-\alpha}\delta.
\end{align} 
Thus, $\norm{v_W - v_U}_p = \Omega(\delta)$, completing the proof.
\end{proof}

Note that as $n$ goes to $\infty$, the coefficient of $\delta$ in the last proof is on the order of $1-\alpha$. However, the coefficient of $\delta$ in our upper bound was on the order of $\frac{1-\alpha}{\alpha}$. The gap between our upper and lower bounds is on the order of $\alpha$. It is an open problem to decrease this gap. 

Theorem~\ref{lower_bound_delta} tells us that the error of our importance measure may be at least on the order of the share of missing data. Thus if we want to accurately determine the most important firms in the network, we should not have too much missing data.

Combining the upper bound in Theorem~\ref{maxdeltacor} with the lower bound in Theorem~\ref{lower_bound_delta}, we have the following result which is sharp up to a multiplicative factor that is on the order of $\alpha$.

\begin{theorem}\label{thmsharpdelta}
Let $W$ be the true matrix of input-output linkages with all of the data, and let $U$ be the observed matrix of the input-output linkages with data missing. Suppose that for each $i = 1, \dots, n$, the missing data accounts for at most a $\delta$ share of the goods in the total intermediate input use of firm $i$, where $0 \le \delta < 1$. If $v_U$ denotes the influence vector calculated with the missing data and $v_W$ denotes the true influence vector, then the maximum possible value of $\norm{v_U - v_W}_p$ is $\Theta(\delta)$ for all $p \ge 1$, where the constants in the bound depend on $\alpha$.
\end{theorem}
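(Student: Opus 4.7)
The theorem is a direct combination of the two preceding results, so the plan is essentially to assemble them and verify that the $O(\delta)$ and $\Omega(\delta)$ pieces match at every $p \ge 1$. First I would handle the upper bound. Theorem~\ref{maxdeltacor} already gives
\[
\norm{v_U - v_W}_1 \le \frac{(1-\alpha)(2\delta-\delta^2)}{\alpha(1-\delta)},
\]
and since $\norm{\cdot}_p$ is decreasing in $p$ on a fixed vector (as noted in Section~\ref{s:def} and used in the corollary to Theorem~\ref{maxdeltacor}), we obtain the same bound for $\norm{v_U - v_W}_p$ for every $p \ge 1$. To extract the $O(\delta)$ statement, I would fix $D = 1/2$ (say), note that for $\delta \in (0, D)$ the factor $(2-\delta)/(1-\delta)$ is bounded above by a constant (namely $3$), and conclude that
\[
\norm{v_U - v_W}_p \le \frac{3(1-\alpha)}{\alpha}\,\delta,
\]
which is the desired $O(\delta)$ bound with a constant depending only on $\alpha$.

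Next I would address the matching lower bound by invoking the explicit construction from Theorem~\ref{lower_bound_delta}. For each $n > 2$, that theorem produces matrices $W_n, U_n$ where $U_n$ arises from $W_n$ by zeroing out a $\delta$ share of the intermediate input data of each firm, and it shows
\[
\lim_{n \to \infty} (y_n(\delta) - b_n) = \frac{1-\alpha}{2-\alpha}\,\delta.
\]
In particular, for all sufficiently large $n$, the second coordinate difference alone has absolute value at least $\tfrac{1}{2}\cdot\tfrac{1-\alpha}{2-\alpha}\,\delta$, giving $\norm{v_{W_n} - v_{U_n}}_p \ge \tfrac{1}{2}\cdot\tfrac{1-\alpha}{2-\alpha}\,\delta$ for any $p \ge 1$, since the $p$-norm of a vector is at least the absolute value of any single coordinate. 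This provides the $\Omega(\delta)$ bound on the maximum possible error, with constant depending only on $\alpha$.

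Combining the two pieces, the maximum possible value of $\norm{v_U - v_W}_p$ over all admissible $(W, U)$ is sandwiched between $\tfrac{1}{2}\cdot\tfrac{1-\alpha}{2-\alpha}\,\delta$ and $\tfrac{3(1-\alpha)}{\alpha}\,\delta$ for $\delta$ sufficiently small, which is exactly $\Theta(\delta)$ in the sense defined in Section~\ref{s:def}. There is no real obstacle here; the only subtlety is being careful that both bounds hold uniformly in $p$, which is handled by monotonicity of the $p$-norm for the upper bound and by the trivial inequality $\norm{x}_p \ge |x_i|$ for the lower bound.
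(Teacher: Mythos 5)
Your proposal is correct and follows exactly the route the paper intends: the paper states this theorem as an immediate combination of Theorem~\ref{maxdeltacor} (upper bound, pushed to all $p \ge 1$ by monotonicity of the $p$-norm) and the construction in Theorem~\ref{lower_bound_delta} (lower bound), giving no further proof. Your write-up simply makes the constants and the uniformity in $p$ explicit, which is consistent with and slightly more detailed than the paper's treatment.
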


To put this result in perspective, note that the maximum possible value of $\norm{v_W}_p$ is $1$ since the $p$-norm is decreasing with respect to $p$ and $\norm{v_W}_1 = 1$. Moreover, the minimum possible value of $\norm{v_W}_p$ is $\frac{1}{n^{1-1/p}}$ by the power mean inequality \cite{power_mean}. 

For more perspective, we derive extremal bounds on the coefficients of the influence vector in the next section. In particular, we see that the maximum possible value of any coefficient is approximately equal to the value of $x$ in the proof of Theorem~\ref{lower_bound_delta}.

\section{Extremal bounds on coefficients of the influence vector}\label{s:extremal}

In this section, we prove some extremal bounds on the values of the coefficients of the influence vector. Our proofs in this section also use the fact which we mentioned in the last section that the influence vector is the steady state of a random walk on the network of firms. 

\begin{theorem}\label{extremalupper}
In a network of $n$ firms, the maximum possible value of any coefficient in the influence vector is \[\frac{1-\frac{n-1}{n}\alpha}{2-\alpha}.\]    
\end{theorem}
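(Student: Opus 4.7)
The plan is to exploit the fixed-point (random walk) characterization of the influence vector noted in Section~\ref{s:error_bounds}. Since $v_W = \frac{\alpha}{n}(I-(1-\alpha)W')^{-1}\vec{\mathbf{1}}$ satisfies $(I-(1-\alpha)W')v_W = \frac{\alpha}{n}\vec{\mathbf{1}}$, each coordinate of $v_W$ obeys
\[
v_i \;=\; (1-\alpha)\sum_{j=1}^{n} w_{ji}\,v_j \;+\; \frac{\alpha}{n}.
\]
By relabeling firms I may assume without loss of generality that the coordinate to be bounded is $v_1$, and since $w_{11}=0$ the sum collapses to $v_1 = (1-\alpha)\sum_{j\ne 1} w_{j1}v_j + \alpha/n$.

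The proof then rests on two elementary facts: (a) $w_{j1}\le 1$ for every $j\ne 1$, because each row of $W$ is a probability distribution; and (b) $\sum_{j=1}^{n} v_j = 1$. Fact (b) is standard for \texttt{PageRank}-type vectors, and I would justify it by one line: expanding the Neumann series $v_W = \frac{\alpha}{n}\sum_{k\ge 0}(1-\alpha)^k(W')^k\vec{\mathbf{1}}$ and using $W\vec{\mathbf{1}}=\vec{\mathbf{1}}$ yields $\vec{\mathbf{1}}'v_W = \frac{\alpha}{n}\sum_k(1-\alpha)^k\vec{\mathbf{1}}'W^k\vec{\mathbf{1}} = \frac{\alpha}{n}\sum_k(1-\alpha)^k \cdot n = 1$. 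In particular $\sum_{j\ne 1}v_j = 1-v_1$, and substituting into the fixed-point equation gives
\[
v_1 \;\le\; (1-\alpha)\sum_{j\ne 1}v_j \;+\; \frac{\alpha}{n} \;=\; (1-\alpha)(1-v_1) \;+\; \frac{\alpha}{n}.
\]
Rearranging yields $v_1(2-\alpha) \le 1-\alpha+\alpha/n = 1-\frac{n-1}{n}\alpha$, which is exactly the claimed upper bound.

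To confirm that the bound is actually attained (and hence is a maximum, not just an upper bound), I would exhibit a network realizing equality. A natural candidate is $w_{j1}=1$ for all $j\ne 1$, with row $1$ completed in any valid way (for example $w_{1 2}=1$). By symmetry $v_j$ is constant for $j\ge 3$, so the steady-state system reduces to three coupled linear equations in three unknowns whose explicit solution gives $v_1 = \frac{n-(n-1)\alpha}{n(2-\alpha)}$, matching the bound. I do not anticipate a serious obstacle; the only subtlety worth flagging is that the two-step inequality $v_1 \le (1-\alpha)(1-v_1)+\alpha/n$ cannot be proved via a purely local variational argument on $W$ (perturbing one entry changes every $v_j$ at once), so sharpness must be verified via an explicit construction rather than by inspecting the equality case of each inequality in the chain.
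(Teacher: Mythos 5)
Your argument is correct and is essentially the paper's proof in algebraic rather than probabilistic dress: the inequality $v_1 \le (1-\alpha)(1-v_1)+\frac{\alpha}{n}$ you obtain from $w_{11}=0$, $w_{j1}\le 1$, and $\sum_j v_j = 1$ is exactly the steady-state inequality the paper derives by tracking the probability that the walker's next location is not firm $1$. Your attaining network ($w_{j1}=1$ for $j\ne 1$ with $w_{12}=1$) is the one the paper records in Remark~\ref{altmaxcoeff}, while the paper's own proof uses the variant $w_{1i}=\frac{1}{n-1}$; both yield $v_1=\frac{1-\frac{n-1}{n}\alpha}{2-\alpha}$.
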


\begin{proof}
We start by proving an upper bound on the value of any coefficient of the influence vector, and then we construct a network of $n$ firms that attains the upper bound. For the upper bound, we use two facts. First, as we mentioned before, the influence vector is the steady state of a random walk on the network of firms. Second, the sum of the coefficients of the influence vector is $1$.

Let $p$ be the value of some coefficient of the influence vector. Without loss of generality, we will assume that it is the first coefficient. Since the sum of the coefficients of the influence vector must be equal to $1$, we know that the coefficients besides the first coefficient must sum up to $1-p$. Suppose that the walker is in the steady state. If the walker's coin lands heads, then the probability that their next location will not be the first firm is at least $p$. This is because they have probability $p$ of currently being on the first firm, and if they are currently on the first firm, then they have probability $1$ of being at a different firm on the next turn since $w_{i i} = 0$ for all $i$. If the walker's coin lands tails, then the probability that their next location will not be the first firm is equal to $\frac{n-1}{n}$, since every firm has probability $\frac{1}{n}$ of being the next location if the coin lands tails. 

Thus, overall the probability that the walker's next location will not be the first firm is at least $(1-\alpha)p + \alpha \frac{n-1}{n}$. Since the walker is in the steady state, the probability that their next location will not be the first firm is equal to $1-p$, so we have \[(1-\alpha)p + \alpha \frac{n-1}{n} \le 1-p.\] Solving the inequality for $p$, we obtain \[p \le \frac{1-\frac{n-1}{n}\alpha}{2-\alpha}.\] This completes the proof of the upper bound.

Now, we construct a network of $n$ firms which attains the upper bound. Consider the network with firms $1, 2, \dots, n$ where $w_{1 i} = \frac{1}{n-1}$ and $w_{i 1} = 1$ for all $i \ge 2$, and all other entries of $W$ are $0$. Let $x$ be the value of the first coefficient of the influence vector of this network, and let $y$ be the value of the other coefficients. By symmetry, all coefficients besides the first have the same value. Since the influence vector is the steady state, we have \[x = (n-1)y(1-\alpha)+\frac{\alpha}{n} \quad \textnormal{and} \quad y = (1-\alpha)\frac{x}{n-1}+\frac{\alpha}{n}.\] When we solve this system of equations, we obtain \[x = \frac{1-\frac{n-1}{n}\alpha}{2-\alpha}.\] Thus we have obtained the maximum possible value of any coefficient in the influence vector.
\end{proof}

Note that as $n$ approaches $\infty$, the maximum possible value of any coefficient of the influence vector approaches $\frac{1-\alpha}{2-\alpha}$. In addition to the network constructed in the proof of Theorem~\ref{extremalupper}, we describe another network of firms which attains the maximum possible coefficient value. 

\begin{remark}\label{altmaxcoeff}
  Consider the network with firms $1, 2, \dots, n$ for $n \ge 3$ where $w_{1 2} = 1$, $w_{i 1} = 1$ for all $i \ge 2$, and all other entries of $W$ are $0$. By symmetry, all coefficients of the influence vector are the same besides the first two coefficients. Let $x$ be the value of the first coefficient, let $y$ be the value of the second coefficient, and let $z$ be the value of the other coefficients. For this network, we obtain the system of equations:

\begin{align}
    x &= (1-\alpha)((n-2)z + y)+\frac{\alpha}{n}, \nonumber \\
    y &= (1-\alpha)x + \frac{\alpha}{n}, \nonumber \\
    z &= \frac{\alpha}{n}.
\end{align}

When we solve the system of equations, we obtain \[x = \frac{1-\frac{n-1}{n}\alpha}{2-\alpha} \quad \textnormal{and} \quad y = \frac{(1-\alpha)(1-\frac{n-1}{n}\alpha)}{2-\alpha}+\frac{\alpha}{n}.\]  
\end{remark}

Now we turn to the minimum possible value of any coefficient of the influence vector, which is attained by the network described in Remark~\ref{altmaxcoeff}. Using the fact that the influence vector is a steady state of the walk described at the beginning of this section, it is relatively easy to prove that $\frac{\alpha}{n}$ is the minimum possible value of any coefficient. 

\begin{theorem}
The minimum possible value of any coefficient in the influence vector is $\frac{\alpha}{n}$ for $n \ge 3$.
\end{theorem}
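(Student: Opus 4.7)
The plan is to prove the lower bound first by invoking the random-walk interpretation that the authors introduced before Theorem~\ref{extremalupper}, and then exhibit a concrete network attaining it, for which the work has essentially already been done in Remark~\ref{altmaxcoeff}.

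First I would set up the steady-state equation. Let $v_W = (p_1, \dots, p_n)$ be the influence vector. Since $v_W$ is the stationary distribution of the described random walk, for every coordinate $i$,
\begin{equation*}
p_i \;=\; (1-\alpha)\sum_{j=1}^{n} p_j\, w_{j i} \;+\; \frac{\alpha}{n}.
\end{equation*}
Because $p_j \ge 0$ and $w_{j i} \ge 0$ for all $j$, the first term on the right is nonnegative, so $p_i \ge \frac{\alpha}{n}$. This is the lower bound, and the argument is independent of $n$; I would note that for $n=2$ the bound is not sharp (the row-sum condition with $w_{ii}=0$ forces $w_{12}=w_{21}=1$, hence $p_1=p_2=\tfrac12 > \tfrac{\alpha}{2}$), which explains the hypothesis $n \ge 3$.

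Next, to show the bound $\frac{\alpha}{n}$ is attained when $n \ge 3$, I would point to the network already analyzed in Remark~\ref{altmaxcoeff}: take $w_{12}=1$, $w_{i 1}=1$ for all $i \ge 2$, and all other entries of $W$ equal to $0$. Each row sum is $1$ and all diagonal entries vanish, so this is a valid input-output linkage matrix. The solution to the system computed in that remark gives $z = \frac{\alpha}{n}$ for the common value of the coefficients $p_3,\dots,p_n$. Conceptually, the reason equality holds is that for each $i \ge 3$ no other firm draws input from firm $i$, i.e.\ $w_{j i}=0$ for every $j$; the sum $\sum_j p_j w_{j i}$ then vanishes and the steady-state equation collapses to $p_i = \frac{\alpha}{n}$.

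There is no real obstacle here beyond bookkeeping — the lower bound is a one-line consequence of nonnegativity in the steady-state equation, and the attainment is already embedded in the previous remark. The only subtlety worth stating explicitly is why the construction is legitimate (valid row sums, zero diagonal) and why $n \ge 3$ is necessary; the $n=2$ comment above handles the latter.
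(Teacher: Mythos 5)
Your proposal is correct and follows essentially the same route as the paper: the lower bound comes from the random-walk/steady-state interpretation (the paper phrases it probabilistically as ``the walker lands on firm $i$ with probability at least $\alpha/n$ via the tails branch,'' which is exactly your observation that the nonnegative term $(1-\alpha)\sum_j p_j w_{ji}$ can only add to $\alpha/n$), and attainment is witnessed by the same network from Remark~\ref{altmaxcoeff}. Your explicit remarks on the validity of the construction and on why $n=2$ is excluded match the paper's treatment as well.
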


\begin{proof}
We start by proving a lower bound on the value of any coefficient of the influence vector, and then we construct a network of $n$ firms that attains the lower bound. For the lower bound, note that for any firm $i$ in the network, the probability that the walker will be on firm $i$ in the next round is $\frac{1}{n}$ if their coin lands tails. Since the coin lands tails with probability $\alpha$, the walker will be at firm $i$ in the next round with probability at least $\frac{\alpha}{n}$. This proves the lower bound. 

When $n = 2$, note that we must have $w_{1 1} = w_{2 2} = 0$ and $w_{1 2} = w_{2 1} = 1$, so by symmetry both coefficients of the influence vector have value $\frac{1}{2}$. When $n = 3$, we use the network defined in Remark~\ref{altmaxcoeff}, for which the value of the $i^{th}$ coefficient of the influence vector is $\frac{\alpha}{n}$ for all $i \ge 3$.
\end{proof}

\section{A negative result: share of firms versus share of intermediate input use}\label{s:negative}
Instead of considering scenarios where the missing data accounts for at most a $\delta$ share of the intermediate input use of each firm, in this section we examine what happens when the missing data accounts for at most a $\delta$ share of the firms in the network. Unfortunately, in this scenario we find that the error in the  influence vector does not go to $0$ as $\delta$ goes to $0$. Note that this is quite different from what we see when the missing data accounts for at most a $\delta$ share of the intermediate input use of each firm, since in that scenario the error in the influence vector scales at most linearly with $\delta$.

In the following theorem, we show that there exist networks of firms where one firm has influence vector coefficient with nearly maximum possible value and all other coefficients are close to $0$, but with missing data a different firm has a coefficient with nearly maximum possible value and all other coefficients are close to $0$.

\begin{theorem}\label{deltafirmshare}
For every $\delta > 0$, there exists $N$ such that for all $n \ge N$, there is a network $G$ of $n$ firms with connectivity matrix parameterized by a variable $\epsilon > 0$ with the following properties:
\begin{enumerate}
    \item The value of the influence vector coefficient of the first firm approaches $\frac{1-\alpha}{2-\alpha}$ and all other coefficients approach $0$ as $\epsilon$ goes to $0$ and $n$ goes to $\infty$.
    \item There exists a scenario where we are missing data for at most a $\delta$ share of the firms in $G$, and in the observed network $G'$ the influence vector coefficient of the second firm approaches $\frac{1-\alpha}{2-\alpha}$ and all other coefficients approach $0$ as $\epsilon$ goes to $0$ and $n$ goes to $\infty$.
\end{enumerate}
\end{theorem}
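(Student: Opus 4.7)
The plan is to build $G$ explicitly so that firm~$1$ plays the role of the extremal hub of Theorem~\ref{extremalupper} in $G$, while deleting a carefully chosen $\delta$-share of firms and renormalizing the surviving rows turns the observed network $G'$ into the corresponding extremal construction with firm~$2$ as hub. Fix $\delta > 0$ and $\epsilon \in (0,1)$, set $m = \lceil \delta n \rceil$, and split $\{3,\ldots,n\}$ into the hidden set $H = \{n-m+1,\ldots,n\}$ (the firms whose data will be missing) and the visible set $V = \{3,\ldots,n-m\}$. The rows of the input-output matrix $W$ of $G$ are defined by $w_{h,1}=1$ for each $h \in H$; $w_{i,2}=\epsilon$ and $w_{i,h}=(1-\epsilon)/m$ for every $h \in H$, for each $i \in V \cup \{1\}$; and $w_{2,j}=1/(n-1)$ for every $j \ne 2$. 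Every row sums to $1$, and the set of firms declared as missing is $H$, so the missing share is $m/n \le \delta$.

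To verify (1), I would read off the steady-state system $v_j = (1-\alpha)\sum_i v_i w_{i,j} + \alpha/n$, using the random-walk characterization of the influence vector as in the proof of Theorem~\ref{lower_bound_delta}. Symmetry collapses the system to four coupled equations in $v_1$, $v_2$, the common value $x_V$ of $v_i$ for $i \in V$, and the common value $x_H$ of $v_h$ for $h \in H$. Inspecting these equations: $x_V$ is fed only by a weight-$1/(n-1)$ input from firm~$2$, so $x_V = O(1/n) \to 0$; firm~$2$'s incoming weights are each of order $\epsilon$, so $v_2 \to 0$ once $\epsilon \to 0$; $x_H$ averages weight-$1/m$ contributions, so $x_H \to 0$ as well; and firm~$1$ absorbs unit weight from every hidden firm, yielding $v_1 = (1-\alpha)\, m\, x_H + o(1)$. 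Combining the hidden-firm equation with the equations for $x_V$ and $v_2$ produces a single asymptotic equation in $v_1$ whose limit, as $\epsilon \to 0$ and $n \to \infty$, is $\tfrac{1-\alpha}{2-\alpha}$.

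For (2), the observed network $G'$ is obtained by deleting every row and column indexed by $H$ and renormalizing each surviving row to sum to $1$. For any $i \in V \cup \{1\}$ the only non-hidden non-self entry in row $i$ is $w_{i,2}=\epsilon$, so renormalization produces $w'_{i,2}=1$. Firm~$2$'s row was uniform over $\{1,\ldots,n\}\setminus\{2\}$, so after renormalization $w'_{2,j}=1/(n-m-1)$ for every surviving $j \ne 2$. The resulting matrix is precisely the extremal construction from the proof of Theorem~\ref{extremalupper} on $n-m$ firms, with firm~$2$ in the place of the maximum-coefficient firm; that proof therefore applies verbatim and gives $v'_2 \to \tfrac{1-\alpha}{2-\alpha}$ and $v'_j \to 0$ for every $j \ne 2$.

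The hard part will be the asymptotic evaluation of $v_1$ in (1). Unlike the one-hop star of Theorem~\ref{extremalupper}, here firm~$1$ absorbs PageRank only through the two-hop path $V \to H \to 1$, which naively attenuates the flow by a factor of $(1-\alpha)^2$. One must carefully verify that the $m = \Theta(n)$ parallel hidden firms acting as the middle layer, together with the $\epsilon$-leak to firm~$2$, conspire in the joint limit $\epsilon \to 0$, $n \to \infty$ to reproduce exactly the extremal limit $\tfrac{1-\alpha}{2-\alpha}$; the corresponding checks that $v_2 \to 0$ and that each individual $v_i, v_h \to 0$ are then routine consequences of the same calculation.
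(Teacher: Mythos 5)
Your part (2) is fine, but part (1) fails, and it fails exactly at the step you deferred. Write out the steady\-/state system for your $G$ at $\epsilon=0$ and let $n\to\infty$ with $m/n\to\delta$: firm $2$ receives nothing, so $v_2=\alpha/n\to0$ and $x_V\approx\alpha/n$, hence $(n-m-2)x_V\to(1-\delta)\alpha$; the hidden layer collectively satisfies $m\,x_H\to(1-\alpha)\bigl[v_1+(1-\delta)\alpha\bigr]+\delta\alpha$; and $v_1\to(1-\alpha)\,m\,x_H$. Solving,
\[
v_1 \;\longrightarrow\; \frac{(1-\alpha)\bigl(1-\alpha(1-\delta)\bigr)}{2-\alpha},
\]
which is strictly smaller than $\frac{1-\alpha}{2-\alpha}$ for every $\delta<1$ (the coefficients still sum to $1$; the missing mass sits in the hidden layer). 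The defect is structural, not computational: the visible firms feed firm $1$ only through the two-hop path $V\to H\to 1$, so their mass reaches firm $1$ attenuated by $(1-\alpha)^2$ rather than $(1-\alpha)$, and the $\delta\alpha$ of teleport mass native to the hidden layer cannot close the gap; no choice of $m$ (larger, smaller, or $o(n)$) helps. Indeed, the proof of Theorem~\ref{extremalupper} shows that the extremal value is approached only when every other firm sends the walker \emph{directly} to firm $1$ on a heads step, which your visible firms never do.

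The paper's construction avoids the intermediary layer entirely: it sets $w_{i1}=1-\epsilon$ and $w_{i2}=\epsilon$ for all $i\ge3$ (with uniform rows for firms $1$ and $2$), so firm $1$ is a one-hop hub and part (1) follows from Theorem~\ref{maxdeltacor} by continuity in $\epsilon$. The missing data is then taken to be firm $1$'s contributions to each $i\ge3$ --- data about a single firm, hence a $\frac1n\le\delta$ share of the firms --- and removing those entries and renormalizing promotes each residual $\epsilon$-edge into a weight-$1$ edge to firm $2$, which is exactly the mechanism you use in your part (2). Your argument can be repaired by collapsing your hidden layer $H$ to this single hidden firm; as written, property (1) is provably false for your $G$.
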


\begin{proof}
First, we choose $N \ge \frac{1}{\delta}$, and suppose that $n \ge N$. Consider the network $G$ of $n$ firms where $w_{1 i} = \frac{1}{n-1}$ for $i \ne 1$, $w_{2 i} = \frac{1}{n-1}$ for $i \ne 2$, $w_{i 1} = 1-\epsilon$ for all $i \ge 3$, $w_{i 2} = \epsilon$ for all $i \ge 3$, and all other entries of $W$ are $0$. By Theorem~\ref{maxdeltacor}, as $\epsilon$ approaches $0$ the influence vector of network $G$ converges to the influence vector of the network defined in Theorem~\ref{lower_bound_delta} with connectivity matrix $U$. Thus, the value of the influence vector coefficient of the first firm in $G$ goes to $\frac{1-\alpha}{2-\alpha}$ as $\epsilon$ goes to $0$ and $n$ goes to $\infty$, while all other coefficients go to $0$. This completes the proof of the first part of the theorem statement.

For the second part, suppose that for each $i \ge 3$, we are missing data on the contribution of firm $1$ of $G$ to the production of firm $i$. Since $n \ge N \ge \frac{1}{\delta}$, we have $\frac{1}{n} \le \delta$, so we are missing data for at most a $\delta$ share of the firms in the network. With the missing data, the observed network $H$ has connectivity matrix $Q$ with $q_{1 i} = \frac{1}{n-1}$ for $i \ne 1$, $q_{2 i} = \frac{1}{n-1}$ for $i \ne 2$, $q_{i 2} = 1$ for all $i \ge 3$, and all other entries of $Q$ are $0$. Note that $Q$ is the same as the connectivity matrix $U$ in Theorem~\ref{lower_bound_delta}, except the first two firms have been transposed. Thus, the value of the influence vector coefficient of the second firm in $H$ goes to $\frac{1-\alpha}{2-\alpha}$ as $\epsilon$ goes to $0$ and $n$ goes to $\infty$, while all other coefficients go to $0$. This completes the proof of the second part of the theorem statement.
\end{proof}

In Theorem~\ref{deltafirmshare}, note that the $p$-error in the influence vector is at least on the order of $\frac{1-\alpha}{2-\alpha}$ as $\epsilon$ goes to $0$ and $n$ goes to $\infty$, since the first two coefficients of the influence vector both have an error which goes to $\frac{1-\alpha}{2-\alpha}$ as $\epsilon$ goes to $0$ and $n$ goes to $\infty$. It is an open problem to determine the maximum possible $p$-error in the influence vector when we are missing data for at most a $\delta$ share of the firms.

\section{Binomial distribution of missing data}\label{s:binomial}

In this section, we bound the error in the influence vector when the missing data has a binomial distribution. In order to state and prove the error bound, we introduce some notation.

\begin{definition}
Recall that $\text{II}_i$ is the intermediate input use of firm $i$. We will assume that $\text{II}_i$ is measured in USD. 
\begin{itemize}
    \item Let $y_{ij}= w_{ij} \text{II}_i$. This corresponds to the amount of intermediate input use for firm $i$ coming from firm $j$. 
    \item We assume that each dollar $j=1,2,\cdots, \text{II}_i$ is observed with probability $1-\zeta$ and is otherwise missing with probability $\zeta$.
    \item Define $M$ so that that $y_{ij} \ge M$ for all $i,j$, e.g., we can take $M$ to be the minimum $y_{ij}$ over all $i, j$.
    \item For each $i,j$ we define the random variable $X_{ij}$ to be the amount of observed USD from the $y_{ij}$ pool of USD.
    \end{itemize}
\end{definition}

By definition, note that $X_{ij}\sim \operatorname{Bin}(r,p)$ where $r=y_{ij}$ and $p=1-\zeta$. Thus, $\mathbb{E}[X_{ij}] = (1-\zeta)y_{ij}$ for each $i, j$ and \[\mathbb{P}(X_{ij}=k) = \binom{y_{ij}}{k} (1-\zeta)^k \zeta^{y_{ij}-k}.\]

\begin{theorem}\label{thm:binomial}
Let $W$ be the true matrix of input-output linkages with all of the data, and let $U$ be the observed matrix of the input-output linkages with data missing. Suppose that for each $i, j$, the observed data $X_{i j}$ satisfies $X_{ij}\sim \operatorname{Bin}(r,p)$ with $r=y_{ij}$ and $p=1-\zeta$ for some $0 < \zeta < 1$. If $v_U$ denotes the influence vector calculated with the observed data and $v_W$ denotes the true influence vector, then for all $q \ge 1$ and $0 < \varepsilon < 1$ we have \[\mathbb{P}\left(\norm{v_U - v_W}_q \le \frac{2(1-\alpha)\varepsilon}{\alpha(1-\varepsilon)}\right) \ge 1-2n^2 e^{-\frac{\varepsilon^2}{3}(1-\zeta)M}.\]
\end{theorem}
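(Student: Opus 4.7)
The plan is to combine a multiplicative Chernoff bound on the binomials $X_{ij}$ with the deterministic perturbation framework already developed in Theorems~\ref{thm:ml_pr} and~\ref{maxdeltacor}. First I would apply the standard multiplicative Chernoff bound to each $X_{ij}\sim\operatorname{Bin}(y_{ij},1-\zeta)$: for $0<\varepsilon<1$,
\[
\mathbb{P}\bigl(|X_{ij}-(1-\zeta)y_{ij}|\ge \varepsilon(1-\zeta)y_{ij}\bigr)\le 2\exp\!\left(-\tfrac{\varepsilon^2(1-\zeta)y_{ij}}{3}\right)\le 2\exp\!\left(-\tfrac{\varepsilon^2(1-\zeta)M}{3}\right),
\]
where the last inequality uses $y_{ij}\ge M$. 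A union bound over the $n^{2}$ entries then shows that with probability at least $1-2n^{2}e^{-\varepsilon^{2}(1-\zeta)M/3}$ the ``good'' event
\[
\mathcal{E}=\bigl\{X_{ij}=(1-\zeta)y_{ij}(1+\eta_{ij})\text{ with }|\eta_{ij}|\le\varepsilon\text{ for all }i,j\bigr\}
\]
holds, which already accounts for the probability factor appearing in the conclusion.

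Next, working deterministically on $\mathcal{E}$, I would establish the row-sum bound $\|W-U\|_{\infty}\le \tfrac{2\varepsilon}{1-\varepsilon}$ by reusing the bookkeeping from the proof of Theorem~\ref{maxdeltacor}. Set $c_{ij}\,\text{II}_i=y_{ij}-X_{ij}$ (the missing dollars of input from firm $j$ to firm $i$) and $d_i=\sum_j c_{ij}$ (the missing share for firm $i$). A short computation on $\mathcal{E}$ gives $c_{ij}=w_{ij}\bigl(\zeta-(1-\zeta)\eta_{ij}\bigr)$ and hence $1-d_i=(1-\zeta)\bigl(1+\sum_k w_{ik}\eta_{ik}\bigr)$. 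Plugging these into the identity $w_{ij}-u_{ij}=(c_{ij}-d_{i}w_{ij})/(1-d_i)$ from the proof of Theorem~\ref{maxdeltacor} produces an overall factor $(1-\zeta)$ in both numerator and denominator which cancels; using $|\eta_{ij}|\le\varepsilon$ and $\sum_{k}w_{ik}=1$ then gives $|w_{ij}-u_{ij}|\le \tfrac{2\varepsilon\,w_{ij}}{1-\varepsilon}$, and summing over $j$ yields the desired row-sum bound.

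To finish, I would plug $\|W-U\|_{\infty}\le 2\varepsilon/(1-\varepsilon)$ into Theorem~\ref{thm:ml_pr} to obtain $\|v_U-v_W\|_{1}\le \tfrac{2(1-\alpha)\varepsilon}{\alpha(1-\varepsilon)}$ on $\mathcal{E}$, and then appeal to the fact that $q\mapsto\|\cdot\|_q$ is non-increasing on $[1,\infty)$ (as in the corollary following Theorem~\ref{maxdeltacor}) to extend the bound to every $q\ge 1$. The concentration step is essentially immediate, so I expect the main obstacle to be the deterministic bookkeeping of the second paragraph: one must verify that the $(1-\zeta)$ factors really do cancel, so that the final estimate on $\|v_U-v_W\|_{q}$ depends on the concentration parameter $\varepsilon$ but not on the underlying missing-probability $\zeta$. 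This cancellation is precisely what lets the theorem's conclusion mirror, in form, the deterministic bound of Theorem~\ref{maxdeltacor} with $\varepsilon$ playing the role of $\delta$.
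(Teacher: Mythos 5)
Your proposal is correct and follows essentially the same route as the paper: Chernoff plus a union bound to get the good event with probability $1-2n^2e^{-\varepsilon^2(1-\zeta)M/3}$, the cancellation of the $(1-\zeta)$ factors in the ratio $u_{ij}=X_{ij}/\sum_k X_{ik}$ yielding $\norm{W-U}_\infty\le 2\varepsilon/(1-\varepsilon)$, and then Theorem~\ref{thm:ml_pr} together with monotonicity of the $q$-norm. Your detour through the $c_{ij},d_i$ bookkeeping of Theorem~\ref{maxdeltacor} is just a repackaging of the paper's direct sandwich $\frac{1-\varepsilon}{1+\varepsilon}w_{ij}\le u_{ij}\le\frac{1+\varepsilon}{1-\varepsilon}w_{ij}$, so there is no substantive difference.
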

\begin{proof}
    For each $i$, by Chernoff's bound \cite{chernoff} we have
\[\mathbb{P}\left(X_{ij} \ge (1+\varepsilon)(1-\zeta)y_{ij}\right) \le e^{-\frac{\varepsilon^2}{3}(1-\zeta)y_{ij}} \quad \textnormal{and} \quad \mathbb{P}\left(X_{i} \le (1-\varepsilon)(1-\zeta)y_{ij}\right) \le e^{-\frac{\varepsilon^2}{3}(1-\zeta)y_{ij}}.
\] Thus, we have
\[\mathbb{P}\left(|X_{ij} - (1-\zeta)y_{ij}| \ge \varepsilon(1-\zeta)y_{ij}\right) \le 2 e^{-\frac{\varepsilon^2}{3}(1-\zeta)y_{ij}} \le 2 e^{-\frac{\varepsilon^2}{3}(1-\zeta)M}.
\] By the union bound, we have 
\begin{align}
\mathbb{P}\left(|X_{ij} - (1-\zeta)y_{ij}| \ge \varepsilon(1-\zeta)y_{ij} \text{ for some }i, j\right) &= \mathbb{P}\bigg(\bigvee_{1 \le i, j \le n} \{|X_{ij} - (1-\zeta)y_{ij}| \ge \varepsilon(1-\zeta)y_{ij}\}\bigg) \nonumber \\
&\le \sum_{1 \le i, j \le n} \mathbb{P}\left(|X_{ij} - (1-\zeta)y_{ij}| \ge \varepsilon(1-\zeta)y_{ij}\right) \nonumber \\ &\le 2e^{-\frac{\varepsilon^2}{3}(1-\zeta)M} \sum_{1 \le i, j \le n} 1 =  2n^2 e^{-\frac{\varepsilon^2}{3}(1-\zeta)M},
\end{align} 
where $\bigvee_{k \in \mathcal{K}} E_k$ denotes the disjunction of events $E_k$ with $k$ indexed over an arbitrary set $\mathcal{K}$. Thus, using the double-sided inequalities 
\[
(1-\varepsilon)(1-\zeta)y_{ij} \le X_{ij} \le (1+\varepsilon)(1-\zeta)y_{ij}
\] 
as well as 
\[
\sum_{1 \le k \le n} (1-\varepsilon)(1-\zeta)y_{ik} \le \sum_{1 \le k \le n} X_{ik} \le \sum_{1 \le k \le n} (1+\varepsilon)(1-\zeta)y_{ik},\] we have
\[u_{ij} = \frac{X_{ij}}{\sum_{k = 1}^n X_{ik}} \le \frac{(1+\varepsilon)(1-\zeta)y_{ij}}{\sum_{k = 1}^n (1-\varepsilon)(1-\zeta)y_{ik}} = \frac{1+\varepsilon}{1-\varepsilon} w_{ij},\] and \[u_{ij} = \frac{X_{ij}}{\sum_{k = 1}^n X_{ik}} \ge \frac{(1-\varepsilon)(1-\zeta)y_{ij}}{\sum_{k = 1}^n (1+\varepsilon)(1-\zeta)y_{ik}} = \frac{1-\varepsilon}{1+\varepsilon} w_{ij},\] for all $i, j$ with probability at least 
\begin{align} \label{eq:auxProb}
1-2n^2 e^{-\frac{\varepsilon^2}{3}(1-\zeta)M}.
\end{align}
Therefore, 
\[u_{ij}-w_{ij} \le \left( \frac{1+\varepsilon}{1-\varepsilon} - 1\right) w_{ij} = \frac{2\varepsilon}{1-\varepsilon} w_{ij} \text{ and } w_{ij} - u_{ij} \le \left( 1-\frac{1-\varepsilon}{1+\varepsilon}\right) w_{ij} = \frac{2\varepsilon}{1+\varepsilon} w_{ij}.\]
Consequently, \[|u_{ij}-w_{ij}| \le \max\bigg(\frac{2\varepsilon}{1-\varepsilon} w_{ij},\frac{2\varepsilon}{1+\varepsilon} w_{ij}\bigg) = \frac{2\varepsilon}{1-\varepsilon} w_{ij}. \]
Thus, we have
\[
\mathbb{P}\bigg(\sum_{j = 1}^n |u_{i j} - w_{i j}| \le \frac{2\varepsilon}{1-\varepsilon} \sum_{j = 1}^n w_{ij} = \frac{2\varepsilon}{1-\varepsilon} \text{ for all $i$}\bigg) \ge 1-2n^2 e^{-\frac{\varepsilon^2}{3}(1-\zeta)M},
\] 
by recalling \eqref{eq:auxProb}. Let $B = W-U$. Then we have \[\mathbb{P}\left(\norm{B}_{\infty} \le \frac{2\varepsilon}{1-\varepsilon} \right) \ge 1-2n^2 e^{-\frac{\varepsilon^2}{3}(1-\zeta)M}.\] Thus, by Theorem~\ref{thm:ml_pr} we have
\[\mathbb{P}\left(\norm{v_U - v_W}_1 \le \frac{2(1-\alpha)\varepsilon }{\alpha(1-\varepsilon)}\right) \ge 1-2n^2 e^{-\frac{\varepsilon^2}{3}(1-\zeta)M}.\] Therefore, \[\mathbb{P}\left(\norm{v_U - v_W}_q \le \frac{2(1-\alpha)\varepsilon}{\alpha(1-\varepsilon)}\right) \ge 1-2n^2 e^{-\frac{\varepsilon^2}{3}(1-\zeta)M},\]
as it was to be shown.
\end{proof} 

\begin{corollary} \label{cor:differencenormQ}
    Suppose that $M = \omega(\ln n)$. Under the same assumptions as Theorem~\textnormal{\ref{thm:binomial}}, we have \[\mathbb{P}\left(\norm{v_U - v_W}_q = o(1) \right) \ge 1-o(1).\]
\end{corollary}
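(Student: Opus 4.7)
The plan is to apply Theorem~\ref{thm:binomial} with a sequence $\varepsilon_n \to 0$ chosen carefully as a function of $n$ so that both the error bound $\tfrac{2(1-\alpha)\varepsilon_n}{\alpha(1-\varepsilon_n)}$ tends to zero and the failure probability $2n^2 e^{-\frac{\varepsilon_n^2}{3}(1-\zeta)M}$ also tends to zero. The hypothesis $M = \omega(\ln n)$ is precisely what allows both conditions to be met simultaneously.

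First I would observe that Theorem~\ref{thm:binomial} holds for every fixed $\varepsilon \in (0,1)$; since the conclusion is uniform in this parameter choice (the event depends on $\varepsilon$ through the explicit bounds), we are free to let $\varepsilon$ depend on $n$. I would then set, for instance,
\[
\varepsilon_n = \left(\frac{\ln n}{M}\right)^{1/3},
\]
which tends to $0$ since $M = \omega(\ln n)$ implies $\ln n / M = o(1)$. With this choice, $\varepsilon_n^2 M = (\ln n)^{2/3} M^{1/3}$, so
\[
\frac{\varepsilon_n^2 M}{\ln n} = \left(\frac{M}{\ln n}\right)^{1/3} \longrightarrow \infty,
\]
i.e., $\varepsilon_n^2 M$ grows strictly faster than $\ln n$.

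Next, I would feed $\varepsilon_n$ into Theorem~\ref{thm:binomial}. The error bound becomes
\[
\frac{2(1-\alpha)\varepsilon_n}{\alpha(1-\varepsilon_n)} = o(1),
\]
since $\varepsilon_n \to 0$ and $\alpha$ is a fixed constant. For the probability, I would rewrite
\[
2n^2 e^{-\frac{\varepsilon_n^2}{3}(1-\zeta)M} = 2 \exp\!\left(2 \ln n - \tfrac{1-\zeta}{3}\,\varepsilon_n^2 M\right),
\]
and since $\varepsilon_n^2 M \gg \ln n$ by construction and $1-\zeta > 0$ is a fixed constant, the exponent tends to $-\infty$, so the whole expression is $o(1)$. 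Hence the probability is $1-o(1)$.

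Combining these two facts, we conclude that with probability at least $1-o(1)$,
\[
\norm{v_U - v_W}_q \le \frac{2(1-\alpha)\varepsilon_n}{\alpha(1-\varepsilon_n)} = o(1),
\]
which is exactly the statement of the corollary. There is no genuine obstacle here, since Theorem~\ref{thm:binomial} does all the heavy lifting; the only subtlety is ensuring that the assumption $M = \omega(\ln n)$ is used sharply enough to permit an $\varepsilon_n$ that is simultaneously small (so the deterministic bound vanishes) and not too small (so the Chernoff tail still beats the union bound over $n^2$ pairs). The cube-root choice above is one convenient way to satisfy both; any $\varepsilon_n$ with $\varepsilon_n \to 0$ and $\varepsilon_n^2 M / \ln n \to \infty$ works.
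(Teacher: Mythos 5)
Your proof is correct and follows essentially the same route as the paper, which simply takes $\varepsilon = \sqrt{\tfrac{7\ln n}{M(1-\zeta)}}$ in Theorem~\ref{thm:binomial} (any constant greater than $6$ in place of $7$ works); your cube-root choice $\varepsilon_n = (\ln n/M)^{1/3}$ is just another admissible selection satisfying $\varepsilon_n \to 0$ while the exponent $\tfrac{1-\zeta}{3}\varepsilon_n^2 M$ dominates $2\ln n$. The only cosmetic difference is that the paper's choice makes $\varepsilon^2 M/\ln n$ a constant (greater than $6/(1-\zeta)$) rather than divergent, so your sufficient condition is slightly stronger than necessary, but this does not affect correctness.
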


\begin{proof}
    Take $\varepsilon = \sqrt{\frac{7\ln n}{M(1-\zeta)}}$ in Theorem~\ref{thm:binomial}.
\end{proof}
We remark that the integer $7$ appearing in the choice of $\varepsilon$ in the proof of Corollary \ref{cor:differencenormQ} could be replaced by any real $\mathfrak{c}$ with $\mathfrak{c} > 6$.

\section{Influence vector for networks of firms and goods}\label{s:refined}

We define a weighted directed network $G = (V, E)$ whose vertex set $V$ consists of ordered pairs $(f, g)$ where $f$ is a firm and $g$ is a good produced by firm $f$. Suppose that $|V| = N$ and we index the vertices $v_1, v_2, \dots, v_N$ where $v_i = (f_i, g_i)$. Note that for a given firm $f$, multiple vertices $v_i = (f_i, g_i)$ can have $f_i = f$ since firm $f$ may produce multiple goods. Moreover for a given good $g$, multiple vertices $v_i = (f_i, g_i)$ can have $g_i = g$ since multiple firms may produce good $g$.

In the network $G$, there is a directed edge $(v_i, v_j)$ from $v_i = (f_i, g_i)$ to $v_j = (f_j, g_j)$ if the good $g_j$ from firm $f_j$ is used in the production of the good $g_i$ from firm $f_i$. We define some terminology in order to explain how the edges are weighted. We say that good $g_j$ from firm $f_j$ has \textit{criticality} $c_{i, j}$ to the production of good $g_i$ from firm $f_i$, where the criticality is proportional to the loss of production of good $g_i$ from firm $f_i$ if good $g_j$ from firm $f_j$ was unavailable.\footnote{A different way to define the criticality $c_{i, j}$ is the cost that firm $f_i$ pays firm $f_j$ to acquire the good $g_j$ for the production of good $g_i$. The choice of definition would depend on how one wants to measure importance.} The criticality $c_{i, j}$ is $0$ if good $j$ from firm $j$ is not used in the production of good $i$ from firm $i$. The edge $(v_i, v_j)$ has weight 
\begin{align} \label{eq:weightij}
w_{i j} = \frac{c_{i, j}}{\sum_{k = 1}^{N} c_{i, k}}.
\end{align}

We want to know the most important firms in the network, so we will use the same general idea as the influence vector of Acemoglu \textit{et al.} \cite{acemoglu} and the \texttt{PageRank} vector of Google \cite{pagerank}. We consider a random walker on the network. We define a random process that defines how the walker moves between vertices, so that each step with high probability takes the walker from their current vertex $v_i = (f_i, g_i)$ to some vertex $v_j = (f_j, g_j)$ for which good $g_j$ from firm $f_j$ is used in the production of the good $g_i$ from firm $f_i$. The probability of moving from $v_i$ to $v_j$ will be proportional to the criticality $c_{i, j}$. 

We determine the probability that the walker is at each vertex as the number of steps approaches infinity. From these probabilities, we can determine for each firm $f$ the probability that the walker is at any of the vertices with first coordinate $f$ as the number of steps goes to infinity, just by summing the probabilities of the vertices with first coordinate $f$. The probabilities for each of the firms are their importance scores, with the intuition for the importance being that the random walker will have a higher probability of being at some vertex $v_j = (f_j, g_j)$ if good $g_j$ from firm $f_j$ is critical to the production of goods in the network.

Now we specify the details of the random walk. Initially the walker chooses a vertex uniformly at random from the network, i.e. each vertex has probability $\frac{1}{N}$ of being chosen. Note that the probability distribution for the initial position is not important, as it will have no effect on the final scores. For each move, the walker chooses a step type by flipping a biased coin which lands heads with probability $1-\alpha$ and tails with probability $\alpha$. If the coin lands heads and the current vertex is $v_i = (f_i, g_i)$, then the walker goes with probability $w_{i j}$ given by \eqref{eq:weightij} to vertex $v_j = (f_j, g_j)$ for which $(v_i, v_j) \in E$. If the coin lands tails, then the walker goes to any vertex in the network uniformly at random, i.e. every vertex in the network has probability $\frac{1}{N}$ of being chosen. 

Given the matrix $W$ with entries $w_{i j}$ for $1 \leq i, j \leq n$, we find the vector $u$ for which $u_i$ is the limit of the probability that the random walker is at vertex $v_i$ as the number of rounds approaches infinity. This is equivalent to finding the limit vector of the random process where we start with the vector $z = \frac{1}{N}\vec{\mathbf{1}}$ of length $N$ with all coordinates equal to $\frac{1}{N}$, and then we repeatedly iterate the function $f(z) = \frac{\alpha}{n} \vec{\mathbf{1}} + (1-\alpha) W' z$. This limit vector can be determined by finding the fixed point $f(u) = u$, which gives us the influence vector  \[u = \frac{\alpha}{n}[I-(1-\alpha)W']^{-1}\vec{\mathbf{1}}.\] Note that this formula is identical to the influence vector defined in \cite{acemoglu} and \cite{pagerank}, with the difference being that we are calculating the influence vector on a different network. Given the influence vector $u$ for the network $G$ where $u_i$ is the probability that the random walker is at vertex $v_i = (f_i, g_i)$ as the number of steps approaches infinity, the probability that the random walker is at some vertex with first coordinate $f$ is 
\[
q_f = \sum_{\substack{1 \le i \le n \\ (f_i, g_i): f_i = f}} u_i
\]
as the number of steps goes to infinity. We define the importance score of firm $f$ to be $q_f$, and we call $q = (q_f)_{f \in \mathcal{F}}$ the \textit{importance score vector} of the set of firms $\mathcal{F}$.

Note that we can use Theorem~\ref{maxdeltacor} and Theorem~\ref{lower_bound_delta} to obtain a sharp bound on the maximum possible error when we calculate the importance score vector $q$ with missing data.

\begin{theorem}
Let $W$ be the true matrix of input-output linkages with all of the data for the network of $N$ firm-good ordered pairs, and let $U$ be the observed matrix of the input-output linkages with data missing for the network of $N$ firm-good ordered pairs. Suppose that for each $i = 1, \dots, N$, the missing data accounts for at most a $\delta$ share of the total intermediate input use of the good $g_i$ from firm $f_i$, where $0 \le \delta < 1$. If $q_U$ denotes the firm importance score vector calculated with the missing data and $q_W$ denotes the true firm importance score vector, then the maximum possible value of $\norm{q_U - q_W}_p$ is $\Theta(\delta)$ for all $p \ge 1$, where the constants in the bound depend on $\alpha$.
\end{theorem}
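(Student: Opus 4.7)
The plan is to reduce the theorem to Theorem~\ref{maxdeltacor} (the $O(\delta)$ upper bound for the influence vector) and Theorem~\ref{lower_bound_delta} (the $\Omega(\delta)$ lower bound via an explicit construction). The key observation is that the firm--good network introduced in this section is itself an input--output network in the sense of Definition~\ref{def:leontief}, and that the firm importance score vector $q_W$ is obtained from the influence vector $u_W$ on this $N$-node network by a partition-summation
\[
q_{W,f} = \sum_{i \,:\, f_i = f} u_{W,i},
\]
where the sets $\{i : f_i = f\}_{f \in \mathcal{F}}$ partition $\{1, \dots, N\}$; the analogous identity holds for $U$.

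For the upper bound I would apply Theorem~\ref{maxdeltacor} directly at the firm--good level. The hypothesis that at most a $\delta$ share of the intermediate input use of each firm--good pair $(f_i, g_i)$ is missing matches the missing-data hypothesis of Theorem~\ref{maxdeltacor} applied to the row-stochastic matrix $W$ with entries \eqref{eq:weightij}, which therefore yields $\norm{u_U - u_W}_1 = O(\delta)$. A triangle-inequality argument over the partition then gives
\[
\norm{q_U - q_W}_1 = \sum_{f \in \mathcal{F}} \Bigl| \sum_{i \,:\, f_i = f} (u_{U,i} - u_{W,i}) \Bigr| \le \norm{u_U - u_W}_1 = O(\delta),
\]
and monotonicity of the $p$-norm in $p$, as recorded in Section~\ref{s:def}, upgrades this to $\norm{q_U - q_W}_p \le \norm{q_U - q_W}_1 = O(\delta)$ for every $p \ge 1$.

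For the matching lower bound I would specialize the firm--good framework by letting each of the $n$ firms in the construction from the proof of Theorem~\ref{lower_bound_delta} produce a single distinct good. Under this identification the firm--good network coincides with the firm network, the importance score vector $q$ coincides with the influence vector $u$, the missing-data hypothesis of the present theorem holds, and the bound $\norm{u_U - u_W}_p = \Omega(\delta)$ transfers verbatim to $\norm{q_U - q_W}_p$. The main \emph{obstacle} here is conceptual rather than computational: one must notice that Theorem~\ref{maxdeltacor} is indifferent to whether nodes are interpreted as firms or as firm--good pairs and therefore applies in the refined setting unchanged, and that the linear aggregation from $u$ to $q$ can only contract the $1$-norm. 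Once these two observations are in place, no new calculation beyond the triangle inequality is needed, and combining the upper and lower bounds gives $\max \norm{q_U - q_W}_p = \Theta(\delta)$ with constants inherited from the two source theorems.
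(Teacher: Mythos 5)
Your proposal is correct and follows essentially the same route as the paper's own proof: both obtain the upper bound by applying Theorem~\ref{maxdeltacor} to the firm--good network and then contracting via the triangle inequality over the partition $q_f = \sum_{i : f_i = f} u_i$, and both obtain the lower bound by specializing to the case where each firm produces exactly one good so that Theorem~\ref{lower_bound_delta} applies verbatim. No gaps.
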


\begin{proof}
The lower bound of $\Omega(\delta)$ follows from Theorem~\ref{lower_bound_delta}, since we can consider a network of firm-good ordered pairs in which every firm produces exactly one good and use the exact same construction as in the proof of Theorem~\ref{lower_bound_delta}. For the upper bound of $O(\delta)$, first note that $\norm{u_U - u_W}_1 = O(\delta)$ by Theorem~\ref{maxdeltacor}, where $u_U$ is the influence vector of the network of firm-good ordered pairs calculated with missing data and $u_W$ is the true influence vector of the network of firm-good ordered pairs. In other words, \[\sum_{i = 1}^{N} |u_{U,i}-u_{W,i}| = O(\delta).\] Since $q_f = \sum_{(f_i, g_i): f_i = f} u_i$ for each firm $f$, we have 
\[
\sum_{f \in \mathcal{F}} |q_{U,f}-q_{W,f}| = \sum_{f \in \mathcal{F}} \bigg|\sum_{(f_i, g_i): f_i = f} u_{U,i}-u_{W,i} \bigg| \le \sum_{f \in \mathcal{F}} \sum_{(f_i, g_i): f_i = f} |u_{U,i}-u_{W,i}| = \sum_{i = 1}^{N} |u_{U,i}-u_{W,i}| = O(\delta).
\]
Thus $\norm{q_U - q_W}_1 = O(\delta)$. Since the $p$-norm is decreasing with respect to $p$, we have $\norm{q_U - q_W}_p = O(\delta)$ for all $p \ge 1$.
\end{proof}

\section{Chains of firms and goods}\label{s:chains}

In this section, we consider a special case when the network of firms and goods can be partitioned into weakly coupled parts, in a sense we define rigorously in Definition \ref{def:weakcoupling}. Under certain assumptions on the connectivity of the firm network, we derive a locality statement for estimating influence vectors with missing data. To do this, we first define a spatial structure on the vector space and algebra of operators associated with the network. 


As before, let $G$ denote a directed graph with edges $E$ between nodes $V = \set{f_1,\ldots, f_N}$. The nodes $V$ denote firms. For each directed edge $(f_i, f_j) \in E$, associate a weight $w_{ij} \geq 0$ with the assumption that for all $i$, $\sum_j w_{ij} =1$. 


Let $\mc{H} = \rr^{N}$ and define an orthonormal basis such that $e_i$ is an orthonormal basis vector associated to $f_i$.  For $X \subset V$, associate the subspace $\mc{H}_X$ as the span of all $e_i$ such that $f_i \in X$. We note that if $A : \mc{H}_X \to \mc{H}_X$ is a matrix, then $A$ extends to a matrix on $\mc{H}$ by $\tilde{A} = A \oplus 0 $, acting as $0$ on the orthogonal complement of $\mc{H}_X$. In the following, we freely identify $A$ with its extension to $\mc{H}$. If $Y \subset V$ and $Y \cap X = \varnothing$, then $\mc{H}_X$ and $\mc{H}_Y$ are orthogonal subspaces of $\mc{H}$. Given an operator $B: \mc{H}_Y \to \mc{H}_Y$, we denote by $A\oplus B : \mc{H}_X \oplus \mc{H}_Y \to \mc{H}_X \oplus \mc{H}_Y$ the direct sum of $A$ and $B$. In the following, we also freely write $A \oplus B$ in block matrix notation, 

\begin{align}
    A \oplus B = \begin{bmatrix}
        A & 0 \\ 0 & B
    \end{bmatrix},
\end{align}
where the $0$ entries are appropriately sized arrays of $0$s. Abusing notation, we will also use block matrix notation to refer to the extension of $A\oplus B$ to $\mc{H}$. 


Let $P_{ij}$ denote the rank one matrix $u \mapsto \ip{e_j, u} e_i $. In this notation, the input-output matrix is $W = \sum_{i,j} w_{ij} P_{ij} $. For matrices, we reserve $\norm{\cdot}$ to denote the spectral norm, which we will refer to as the operator norm. Otherwise, for vectors, $\norm{\cdot}$ denotes the Euclidean norm. 


\noindent \textbf{Bi-partition of graph}: Consider a nontrivial partition $V = V_1 \cup V_2$ with $V_1 \cap V_2 =  \varnothing$. We consider the matrices
\begin{align}\label{eq:two-case}
W_1 = [ w^{(1)}_{ij}] , ~~ W_2 = [w^{(2)}_{ij}], ~~ \textnormal{and} ~~ W_{\operatorname{int}} = [ W^{(\operatorname{int})}_{ij}],
\end{align}
where the matrix entries are defined as follows. For $n=1,2$, $w^{(n)}_{ij} = 0$ if $\set{f_i, f_j} \not \subset V_n $ and $w^{(n)}_{ij} = w_{ij}$ otherwise, i.e. $\set{ f_i, f_j }\subset V_n$.  Let $W_{\operatorname{int}}$ be defined by $W_{\operatorname{int}} = W - W_1 - W_2$. 


We note a couple of immediate properties. First, we prove that $(I - (1-\alpha)W_n')^{-1}$ exists for any $\alpha \in (0,1)$ and $n=1,2$.

\begin{lemma}\label{lem:inverse}
Suppose $U = [u_{ij}] \in M_N(\rr) $, $u_{ij}\geq 0$ for all $i,j$, and $\sum _j u_{ij} < 1$. Then $I- U'$ is invertible.
\end{lemma}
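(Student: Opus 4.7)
The plan is to reduce the claim to a statement about the spectrum (or operator norm) of $U$ itself, exploiting the strict row-sum bound.

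First I would note that a square matrix is invertible if and only if its transpose is, so it suffices to prove that $I - U$ is invertible; this avoids carrying around the transpose in the estimates. The key quantitative input is the hypothesis $\sum_j u_{ij} < 1$ for every $i$, combined with $u_{ij}\geq 0$. Since the entries are nonnegative, the max-row-sum norm satisfies
\[
\norm{U}_\infty = \max_i \sum_j |u_{ij}| = \max_i \sum_j u_{ij} < 1,
\]
using that the maximum over the finite set $\{1,\ldots,N\}$ of a collection of numbers each strictly less than $1$ is itself strictly less than $1$.

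From $\norm{U}_\infty < 1$ I would conclude invertibility in either of two equivalent ways. The cleaner route is the Neumann series: because $\norm{\cdot}_\infty$ is submultiplicative, $\norm{U^k}_\infty \leq \norm{U}_\infty^k$, so $\sum_{k=0}^\infty U^k$ converges entrywise (in fact in operator norm) to a matrix $S$, and a standard telescoping computation shows $(I-U)S = S(I-U) = I$. Hence $I-U$ is invertible with inverse $\sum_{k\geq 0}U^k$. Alternatively, one can argue that every eigenvalue $\lambda$ of $U$ obeys $|\lambda|\leq \norm{U}_\infty < 1$ (or cite Gershgorin directly, since the disks are centered at $u_{ii}\in[0,1)$ with radii $\sum_{j\neq i}u_{ij}$, all contained in $\{|z-u_{ii}|\leq 1-u_{ii}\}\subset \{z:\real z<1\}$ away from $1$), so $1$ is not an eigenvalue and $\det(I-U)\neq 0$.

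Finally, since $(I-U)$ is invertible, so is its transpose $(I-U)' = I-U'$, which is the desired conclusion. There is no real obstacle here; the only thing to be slightly careful about is that strict inequality in the row sums, applied over a finite index set, really does give a uniform bound $\norm{U}_\infty<1$ rather than merely $\norm{U}_\infty\leq 1$, which is what drives the Neumann/spectral argument.
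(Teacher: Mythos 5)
Your proof is correct and takes essentially the same route as the paper: both arguments observe that the strict row-sum bound forces an operator norm of the relevant matrix to be strictly less than $1$ (you use $\norm{U}_\infty$, the paper uses the equal quantity $\norm{U'}_1$) and then invoke the convergent Neumann series to invert $I-U$ (equivalently $I-U'$). The transposition step and the Gershgorin alternative are cosmetic variations, not a different method.
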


\begin{proof}
Since the maximum row sum of $U$ is less than $1$, $\norm{U'}_1 < 1$. But this implies that $\sum_{k =0}^\infty (U')^k$ is a convergent series in the metric topology from $\norm{\cdot}_1$. Since all norms on finite dimensional vector spaces are equivalent, this implies $S = \sum _{k=0}^\infty (U')^k$ is an $N\times N$ matrix. But this is the Neumann series for the inverse of $I - U'$, with
\begin{align}
S (I - U') =  (I - U')S = I ,
\end{align}
and hence $I-U'$ can indeed be inverted.
\end{proof}

Next, we show that the $W_n$ are operators on orthogonal subspace of $\mc{H}$. 

\begin{lemma}\label{lem:directsum}
Consider $\mc{H} = \mc{H}_1 \oplus \mc{H}_2$, where $\mc{H}_n = \operatorname{span}( e_k : f_k \in V_n).$ Then $\operatorname{ran}(W_n) \subset \mc{H}_n$ for $n=1,2$. Additionally, $\mc{H}_2 \subset \operatorname{ker}(W_1)$ and $\mc{H}_1 \subset \operatorname{ker}(W_2)$. Lastly, $W_1 W_2 = W_2 W_1 =0$. 
\end{lemma}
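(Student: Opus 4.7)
\textbf{Plan for proof of Lemma \ref{lem:directsum}.} The plan is to unfold the definitions $W_n = \sum_{i,j}w^{(n)}_{ij}P_{ij}$ with the support condition $w^{(n)}_{ij}=0$ unless $\{f_i,f_j\}\subset V_n$, and then read off all three conclusions by computing the action on basis vectors. The whole lemma is a bookkeeping statement about the supports of the block matrices $W_1$ and $W_2$ obtained from the bi-partition, so no real estimate is needed; the only care required is to keep track of when an index $k$ lies in $V_1$ versus $V_2$.

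First I would establish $\operatorname{ran}(W_n)\subset \mathcal{H}_n$. Applying $W_n$ to a basis vector $e_k$ and using $P_{ij}e_k = \langle e_j,e_k\rangle e_i = \delta_{jk}e_i$, we get
\[
W_n e_k \;=\; \sum_{i}w^{(n)}_{ik}\,e_i \;=\; \sum_{i:\,f_i\in V_n,\;f_k\in V_n}w^{(n)}_{ik}\,e_i,
\]
where the second equality invokes the defining support condition on $W_n$. If $f_k\notin V_n$ the sum is empty and $W_n e_k=0$; if $f_k\in V_n$ the sum only involves basis vectors indexed by $V_n$ and hence lies in $\mathcal{H}_n$. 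Either way $W_n e_k\in\mathcal{H}_n$, and by linearity $\operatorname{ran}(W_n)\subset\mathcal{H}_n$.

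Next I would read off the kernel inclusions from the same calculation. The case $f_k\notin V_n$ shows that $W_n e_k=0$ for every basis vector $e_k$ of $\mathcal{H}_m$ with $m\neq n$. Since these basis vectors span $\mathcal{H}_m$, linearity gives $\mathcal{H}_2\subset\operatorname{ker}(W_1)$ and $\mathcal{H}_1\subset\operatorname{ker}(W_2)$.

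Finally, the two vanishing products follow by composing the range and kernel inclusions already obtained: $\operatorname{ran}(W_1)\subset\mathcal{H}_1\subset\operatorname{ker}(W_2)$ yields $W_2W_1=0$, and symmetrically $\operatorname{ran}(W_2)\subset\mathcal{H}_2\subset\operatorname{ker}(W_1)$ yields $W_1W_2=0$. I do not anticipate any real obstacle; the only subtle point worth flagging is the identification of $W_n$ with its extension by $0$ on $\mathcal{H}_n^{\perp}$ (as introduced just before the lemma), which is what lets the statements $\operatorname{ran}(W_n)\subset\mathcal{H}_n$ and $\mathcal{H}_m\subset\operatorname{ker}(W_n)$ make sense simultaneously as statements about operators on the full space $\mathcal{H}$.
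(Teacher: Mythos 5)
Your proof is correct and follows essentially the same route as the paper: both arguments unfold $W_n=\sum w^{(n)}_{ij}P_{ij}$ and read off the range and kernel inclusions from the supports of the rank-one operators $P_{ij}$. The only cosmetic difference is in the last step, where the paper checks $P_{ij}P_{ab}=0$ directly from $\langle e_j,e_a\rangle=0$ across the partition, while you deduce $W_1W_2=W_2W_1=0$ by chaining the range and kernel inclusions you already established; both are valid and equally short.
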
 

\begin{proof}
The proof of the kernel and image inclusion is the same for $W_1$ and $W_2$, so consider only $W_1$. By definition, 
\begin{align}
W_1 = \sumtwo_{ i,j: f_i, f_j \in V_1} w_{ij} P_{ij} 
\end{align}
and each $P_{ij}$ has kernel $\operatorname{span}(e_j)^\perp$ and range $\operatorname{span}(e_i)$. Hence $\mc{H}_2 \subset \operatorname{ker}(W_1)$ and $\operatorname{ran}(W_1) \subset \mc{H}_1$. 


The equation $W_1 W_2 = W_2 W_1 =0$ follows from the fact that if $f_i \in V_1$ and $f_a \in V_2$, then $\ip{ e_i, e_a} = 0$; since then, writing 
\[
W_1 = \sumtwo_{i,j} w_{ij} P_{ij} \quad \textnormal{and} \quad W_2 = \sumtwo_{a,b} w_{ab} P_{ab},
\]
we get
\begin{align}
W_1 W_2 = \sumtwo_{\substack{ i,j : f_i, f_j \in V_1 \\ a,b : f_a,f_b\in V_2}} w_{ij} w_{ab} P_{ij}P_{ab} = 0 = W_2 W_1, 
\end{align} 
which is what we wanted to show.
\end{proof}

By relabelling the basis if necessary, we can represent the direct sum in Lemma \ref{lem:directsum} as a block diagonal matrix. We write
\begin{align}
W_1 = \begin{bmatrix} \hat{W}_1 & 0 \\ 0 & 0\end{bmatrix}, ~~~ W_2 = \begin{bmatrix} 0 & 0 \\ 0 & \hat{W}_2 \end{bmatrix}
\end{align}
where the $0$ entries are appropriately sized matrices of $0$s, not necessarily square, and we identify $W_i$ with $\hat{W}_i$.

\begin{definition}
Consider the input-output matrix $W = (W_1 \oplus W_2) + W_{\operatorname{int}}$ and Leontief inverse $L_\perp = (I - (1-\alpha)W_1' - (1-\alpha)W_2')^{-1} $, which exists by Lemma \ref{lem:inverse}. Define the \textit{interaction matrix} 
\begin{align}
S = (I - (1-\alpha)L_\perp W_{\operatorname{int}}')^{-1}.
\end{align}
\end{definition}
The matrix $S$ is well-defined since 
\begin{align}
(I - (1-\alpha) L_\perp W_{\operatorname{int}}') = L_\perp (I - (1-\alpha)W')  \in \textnormal{GL}_N(\rr).
\end{align}

\begin{lemma}\label{lem:basecase}
Denote $|V_n| = m_n$. Denote by $v_1, v_2$ the influence vectors $v_n = \frac{\alpha}{m_n} (I_n - (1-\alpha)W_n')^{-1} \vec{1}_n$, where  $I_n$ is the identity matrix of $\mc{H}_n$ and $\vec{\mathbf{1}}_n$ is the $|V_n|\times 1$ vector of all $1$s. Then 
\begin{align}
v_W = S \bigg{(} \frac{|V_1| }{N}  v_1 \oplus  \frac{ |V_2|}{N} v_2 \bigg{)}. \label{eq:influence-vec}
\end{align}
Furthermore, suppose the Neumann series for $S$ converges absolutely in norm. Denote 
\[
W_{\operatorname{int}}' = \begin{bmatrix} 0 & A_1 \\ A_2 & 0 \end{bmatrix}. 
\]
Then
\begin{align}\label{eq:Smatrix}
S = \begin{bmatrix} X & (1-\alpha) (I_{m_1} - (1-\alpha)W_1')^{-1}A_1 Y   \\ (1-\alpha) (I_{m_2} - (1-\alpha)W_2')^{-1} A_2 X & Y  \end{bmatrix}
\end{align}
where $X$ and $Y$ are given by
\[
X =  \bigg{(} I_{m_1} - (1-\alpha)^2 (I_{m_1} - (1-\alpha)W_1')^{-1} A_1  (I_{m_2} - (1-\alpha)W_2')^{-1} A_2 \bigg{)}^{-1} 
\]
as well as
\[
Y =  \bigg{(} I_{m_2} - (1-\alpha)^2 (I_{m_2} - (1-\alpha)W_2')^{-1} A_2  (I_{m_1} - (1-\alpha)W_1')^{-1} A_1 \bigg{)}^{-1}.
\]
\end{lemma}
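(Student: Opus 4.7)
The plan is to factor the inverse $(I-(1-\alpha)W')^{-1}$ into an ``uncoupled'' piece captured by $L_\perp$ and an ``interaction'' piece captured by $S$, then exploit the block-diagonal structure guaranteed by Lemma \ref{lem:directsum}.

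First I would use $W = W_1 + W_2 + W_{\operatorname{int}}$ to write
\begin{align*}
I - (1-\alpha)W' \;=\; L_\perp^{-1} - (1-\alpha)W_{\operatorname{int}}' \;=\; L_\perp^{-1}\bigl(I - (1-\alpha)L_\perp W_{\operatorname{int}}'\bigr).
\end{align*}
Lemma \ref{lem:inverse} yields invertibility of both $L_\perp$ and $I-(1-\alpha)W'$, so taking inverses gives $(I-(1-\alpha)W')^{-1} = S\,L_\perp$. Applying this to $\tfrac{\alpha}{N}\vec{\mathbf{1}}$ produces $v_W = \tfrac{\alpha}{N} S\, L_\perp \vec{\mathbf{1}}$. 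By Lemma \ref{lem:directsum}, $L_\perp$ is block-diagonal with blocks $(I_{m_n}-(1-\alpha)W_n')^{-1}$, and the all-ones vector splits as $\vec{\mathbf{1}} = \vec{\mathbf{1}}_{m_1}\oplus\vec{\mathbf{1}}_{m_2}$. The definition of $v_n$ rearranges to $(I_{m_n}-(1-\alpha)W_n')^{-1}\vec{\mathbf{1}}_{m_n} = \tfrac{m_n}{\alpha} v_n$, so $L_\perp\vec{\mathbf{1}} = \tfrac{m_1}{\alpha}v_1 \oplus \tfrac{m_2}{\alpha}v_2$. Substituting and canceling $\alpha$ yields the first claim \eqref{eq:influence-vec}.

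Next, for the explicit block form of $S$, I would compute $L_\perp W_{\operatorname{int}}'$ in block form (the two diagonal blocks vanish because $L_\perp$ is block-diagonal and $W_{\operatorname{int}}'$ is block-antidiagonal), giving
\begin{align*}
I - (1-\alpha) L_\perp W_{\operatorname{int}}' \;=\; \begin{bmatrix} I_{m_1} & -(1-\alpha)(I_{m_1}-(1-\alpha)W_1')^{-1}A_1 \\ -(1-\alpha)(I_{m_2}-(1-\alpha)W_2')^{-1}A_2 & I_{m_2} \end{bmatrix}.
\end{align*}
I would then invoke the standard Schur-complement inversion for $2\times 2$ block matrices. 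With the diagonal blocks equal to $I_{m_1}$ and $I_{m_2}$, the two Schur complements are precisely $X^{-1}$ and $Y^{-1}$, so the diagonal blocks of $S$ are $X$ and $Y$. For the off-diagonal blocks I would use the algebraic identity $A^{-1}B(D-CA^{-1}B)^{-1} = (A-BD^{-1}C)^{-1}BD^{-1}$ to rewrite the Schur output in the two symmetric forms; taking the version that factors $L_1 A_1$ (resp.\ $L_2A_2$) on the left produces the claimed $(1-\alpha)(I_{m_1}-(1-\alpha)W_1')^{-1}A_1 Y$ and $(1-\alpha)(I_{m_2}-(1-\alpha)W_2')^{-1}A_2 X$.

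No step is a genuine obstacle; the content is the single factorization $L_\perp^{-1}(I - (1-\alpha)L_\perp W_{\operatorname{int}}')$ together with careful block-matrix bookkeeping. The one point requiring brief comment is the invertibility of the Schur complements $X^{-1}$ and $Y^{-1}$: this is automatic once one knows the full block matrix is invertible (equivalently, that $S$ exists) and the diagonal blocks $I_{m_1}$, $I_{m_2}$ are trivially invertible, so the Schur-complement identities apply unambiguously. The absolute convergence of the Neumann series $S = \sum_{k=0}^\infty ((1-\alpha)L_\perp W_{\operatorname{int}}')^k$ provides an alternative (and more quantitative) justification of the same algebraic manipulations, and will be useful when localizing the perturbation analysis in subsequent sections.
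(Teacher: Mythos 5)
Your proof of the first identity \eqref{eq:influence-vec} is essentially the paper's: the same factorization $I-(1-\alpha)W' = L_\perp^{-1}\bigl(I-(1-\alpha)L_\perp W_{\operatorname{int}}'\bigr)$, followed by the block-diagonal action of $L_\perp$ on $\vec{\mathbf{1}}_{m_1}\oplus\vec{\mathbf{1}}_{m_2}$. For the block form \eqref{eq:Smatrix}, however, you take a genuinely different route. The paper expands the Neumann series $\sum_{k\ge 0}((1-\alpha)L_\perp W_{\operatorname{int}}')^k$, separates even and odd powers into the four blocks, verifies that each sub-series is Cauchy by comparison with the even-indexed partial sums, and then resums the diagonal blocks as $(I-\lambda^2\ell_1\ell_2)^{-1}$ and $(I-\lambda^2\ell_2\ell_1)^{-1}$; this is why the lemma carries the hypothesis that the Neumann series converges absolutely. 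You instead invert the $2\times 2$ block matrix $I-(1-\alpha)L_\perp W_{\operatorname{int}}'$ directly via Schur complements, using the push-through identity $(I-BC)^{-1}B = B(I-CB)^{-1}$ to put the off-diagonal blocks in the stated form. Your argument is correct, and it buys something the paper's does not: since the invertibility of $S^{-1}=L_\perp(I-(1-\alpha)W')$ is established unconditionally just after the definition of $S$, and the Schur determinant identity $\det M = \det(I_{m_1})\det(I_{m_2}-CA^{-1}B)$ then forces both Schur complements $X^{-1}$ and $Y^{-1}$ to be invertible, your proof dispenses with the absolute-convergence hypothesis entirely and shows the block formula holds whenever $S$ exists. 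The paper's series approach, on the other hand, exhibits $X$ and $Y$ as explicit convergent sums over paths alternating between the two components, which is the picture exploited later in the directed-chain estimates; the two derivations are complementary rather than competing.
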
 

\begin{proof}
First, we rewrite
\begin{align}
I - (1-\alpha)W' = ( I - (1-\alpha)W_1' - (1-\alpha)W_2 ' ) ( I - (1-\alpha)L_\perp W_{\operatorname{int}}') 
\end{align}
and note that 
\begin{align}
L_\perp  & = \begin{bmatrix} (I_{m_1} - (1-\alpha)W_1')^{-1} & 0 \\ 0 & (I _{m_2} - (1-\alpha)W_2')^{-1}  \end{bmatrix}  \nonumber \\ 
& =  \begin{bmatrix} (I_{m_1} - (1-\alpha)W_1')^{-1} & 0 \\ 0 & I _{m_2} \end{bmatrix} \begin{bmatrix} I_{m_1}& 0 \\ 0 & (I _{m_2} - (1-\alpha)W_2')^{-1}  \end{bmatrix}.
\end{align}
Hence
\begin{align}
v_W & = \frac{\alpha}{N} ( I - (1-\alpha)W')^{-1} \vec{\mathbf{1}} \nonumber \\
& =  \frac{\alpha}{N} ( I - (1-\alpha)L_\perp W_{\operatorname{int}}') ^{-1} L_\perp \vec{\mathbf{1}} \nonumber \\
& = S \frac{\alpha}{N} \begin{bmatrix}  (I_{m_1} - (1-\alpha)W_1')^{-1} \vec{\mathbf{1}}_{m_1} \\  (I_{m_2} - (1-\alpha)W_2')^{-1} \vec{\mathbf{1}}_{m_2}\end{bmatrix} \nonumber \\
& =  S \bigg{(} \frac{m_1 }{N}  v_1 \oplus  \frac{m_2 }{N} v_2 \bigg{)}, 
\end{align}
proving equation (\ref{eq:influence-vec}). Next, we prove equation (\ref{eq:Smatrix}) formally. By definition, $W_{\operatorname{int}}'$ is block off-diagonal, with parts
\begin{align}
W_{\operatorname{int}}' = \begin{bmatrix}  0 & A_1 \\ A_2 & 0 \end{bmatrix}.
\end{align}
Denoting $L_{W_n} = (I_{m_n} - (1-\alpha)W_n')^{-1}$, 
\begin{align}
 L_\perp W_{\operatorname{int}}' = \begin{bmatrix} 0 & L_{W_1}A_1 \\ L_{W_2}A_2 & 0  \end{bmatrix}. 
\end{align}
Let $\ell_n = L_{W_n} A_n$ and $\lambda = 1-\alpha$. It follows, 
\begin{align}\label{eq:components}
\sum _{k=0}^\infty ( (1-\alpha)L_\perp W_{\operatorname{int}}') ^{k} = I + \begin{bmatrix}  \sum _{k\geq 1} (\la^2  \ell_1 \ell_2 )^k  & (1-\alpha)\ell_1 (I_{m_2}+\sum _{k \geq 1} (\la^2 \ell_2 \ell_1)^k) \\ (1-\alpha)\ell_2(I_{m_1}+ \sum _{k\geq 1} (\la^2  \ell_1 \ell_2 )^k ) &\sum _{k \geq 1} (\la^2 \ell_2 \ell_1)^k  \end{bmatrix}.
\end{align}
Given that the Neumann series for $S$ converges absolutely, it suffices to prove that the components of equation (\ref{eq:components}) also converge absolutely. Since all norms are equivalent in finite dimensional vector spaces, we consider the operator norm. Then
\begin{align}
\bigg|\bigg| \sum _{k \geq M} (\la^2 \ell_1 \ell_2)^k \bigg|\bigg| \leq \norm{  \begin{bmatrix}  \sum _{k \geq M} (\la^2 \ell_1 \ell_2)^k & 0 \\ 0 &  \sum _{k \geq M} (\la^2 \ell_2 \ell_1)^k \end{bmatrix}  } = \bigg|\bigg| \sum _{\substack{k \geq M \\ k \equiv 0 \modu 2}} ((1-\alpha)L_\perp W_{\operatorname{int}}')^k \bigg|\bigg|. \nonumber
\end{align} 
Hence the partial sums of $X$ are Cauchy. The same argument shows that the series for $Y$ also converges absolutely in norm when the Neumann series for $S$ converges absolutely in norm. Then we identify 
\begin{align}
    \sum _{k \geq 0} ((1-\alpha)\ell_1 \ell_2)^k  = (I_{m_1} - \la^2 \ell_1\ell_2)^{-1} \quad \textnormal{and} \quad \sum _{k \geq 0} ((1-\alpha)\ell_2 \ell_1)^k  = (I_{m_2} - \la^2 \ell_2\ell_1)^{-1}
\end{align} 
to get the statement of the lemma.
\end{proof}

\subsection{Application: directed chains}

Now we apply the results above to a special case where $G$ has the structure of a directed chain. %


\begin{definition}\label{def:directedchain}
The partition $\set{V_{r}}$ is a \textit{directed chain} if (1) for all $r$, the nodes of $V_{r}$ only interact with nodes in $V_{r}, V_{r-1}$ or $V_{r+1}$, with the endpoints $V_1,V_M$ only interacting with $V_1,V_2$ and $V_M, V_{M-1}$, respectively, and (2) if there is an edge between $f\in V_{i}$ and $g\in V_{i+1}$, then the direction is from $f$ to $g$.
\end{definition}
Diagrammatically, if $G$ is a network of firms admitting a directed chain partition, and $G_i$ is the graph defined by nodes $V_i$, then $G$ has the form:
\begin{align}\label{eq:graph}
G_1 \to G_2 \to \ldots \to G_{M-1} \to G_M
\end{align}
where $\to$ denotes all of the weighted and directed edges which connect two components $G_{i}, G_{i+1}$ together. In this case, the input-output matrix $W$ decomposes as
\begin{align}\label{eq:wi}
W = \bigoplus _{i=1}^M W_i + \sum _{i=1}^{M-1} W_{\operatorname{int}}^{(i,i+1)}
\end{align}
in the obvious way generalizing equation (\ref{eq:two-case}). The directed chain assumption implies $$(W_{\operatorname{int}}^{(i,i+1)})' = \begin{bmatrix} 0 & A_1^{(i,i+1)} \\ 0 & 0 \end{bmatrix}.$$

We now introduce two new coupling concepts.

\begin{definition}\label{def:weakcoupling}
Say the directed chain $\set{V_i}$ with input-output matrix $W$ is \textit{weakly coupled} if $$\max _{i} \norm{ (I_{m_i} - (1-\alpha)W_i')^{-1} A^{(i,i+1)}_1} < \frac{1}{1-\alpha}.$$
Define the \textit{coupling constant} as 
\begin{align}\label{eq:cc}
    \gamma = (1-\alpha) \max _{i} \norm{ (I_{m_i} - (1-\alpha)W_i')^{-1} A^{(i,i+1)}_1}.
\end{align}
\end{definition}


\noindent \textbf{Example:} The weak coupling condition is satisfied for the model where $V_i = \set{f_i}$, 
\begin{align}
G: \set{ f_1} \to \set{f_2} \to \cdots \to \set{ f_M} ,
\end{align}
and $ 0< \alpha < 1$. To see this, we note $W_i = [0]$  is a $1\times 1$ matrix, and $A_1^{(i,i+1)} = 1$ for all $i$. 


\begin{theorem}\label{thm:finitecorrelation}
Suppose the chain in \eqref{eq:graph} is weakly coupled. Fix $K < M$. For $q \leq K $, let $P_q$ denote the orthogonal projection onto $\mc{H}_{\set{1,\ldots, q}}$. If $W = [w_{ij}]$ and $U = [u_{ij}]$ are input-output matrices such that $w_{ij} \not = u_{ij} $ only if $$f_{j} \in \bigcup_{l \geq K} V_l,$$ then 
\begin{align}
\norm{ P_q ( v_W - v_U)} \leq \frac{ 2 \sqrt{q} | \cup _{i=K}^{M} V_i|}{N}   e^{- \log(1/\gamma) (K-q)}
\end{align}
where $\gamma$ is the coupling constant from \eqref{eq:cc}.
\end{theorem}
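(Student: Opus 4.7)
The starting point is the Leontief-inverse perturbation identity underlying Theorem~\ref{thm:ml_pr}. Writing $B = W - U$, one has $L_W - L_U = (1-\alpha) L_W B' L_U$, so after multiplying by $\frac{\alpha}{N}\vec{\mathbf{1}}$,
\[
v_W - v_U \;=\; (1-\alpha)\, L_W B' v_U .
\]
The hypothesis that $w_{ij}\neq u_{ij}$ only when $f_j\in \bigcup_{l\geq K} V_l$ forces $B$ to vanish outside those columns, so $B'$ vanishes outside the corresponding rows, and hence $y := B' v_U$ lies in $\mc{H}_{\bigcup_{l\geq K} V_l}$. Row-stochasticity of $W,U$ together with $\|v_U\|_1=1$ yield $\|y\|_1 \leq 2$, which I would sharpen by using the normalization $v_U = \frac{\alpha}{N} L_U\vec{\mathbf{1}}$ to control the Euclidean norm of $y$ in terms of $|\bigcup_{l\geq K} V_l|/N$.

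Next I would generalize the factorization of Lemma~\ref{lem:basecase} from a bipartition to the whole chain: set $L_\perp = \bigoplus_{i=1}^M (I_{m_i} - (1-\alpha) W_i')^{-1}$ and $T = (1-\alpha) L_\perp W_{\textnormal{int}}'$; the same algebraic identity as in the two-part case gives $L_W = (I-T)^{-1} L_\perp$. By the directed chain structure of $(W_{\textnormal{int}}^{(i,i+1)})'$, the matrix $T$ is block-bidiagonal with respect to the decomposition $\mc{H} = \bigoplus_i \mc{H}_{V_i}$: only the blocks
\[
T_i = (1-\alpha) L_{W_i} A_1^{(i,i+1)}\;:\; \mc{H}_{V_{i+1}}\to \mc{H}_{V_i}
\]
are nonzero, and each has operator norm at most $\gamma$ by the definition of the coupling constant. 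A block-Pythagoras calculation then shows $\|T^k z\|\leq \gamma^k \|z\|$ uniformly in $z$, because $(T^k x)_i = T_i T_{i+1}\cdots T_{i+k-1} x_{i+k}$ depends only on the single component $x_{i+k}$.

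The heart of the proof is the localization afforded by $T$ shifting block-indices down by exactly one. Since $L_\perp$ preserves each $\mc{H}_{V_j}$, the vector $L_\perp y$ is still supported in $\bigcup_{l\geq K} V_l$, so $T^k L_\perp y$ is supported in $\bigcup_{l\geq K-k} V_l$; in particular $P_q T^k L_\perp y = 0$ whenever $k < K-q$. Expanding the Neumann series $(I-T)^{-1} = \sum_k T^k$ and dropping the vanishing terms gives
\[
P_q L_W y \;=\; \sum_{k\geq K-q} P_q T^k L_\perp y ,
\]
and the triangle inequality with the uniform estimate $\|T^k z\|\leq \gamma^k \|z\|$, the bound $\|L_{W_j}\|\leq 1/\alpha$ (valid because $(1-\alpha) W_j'$ is sub-stochastic and so has spectral radius at most $1-\alpha$), and summation of the geometric series yield
\[
\|P_q L_W y\| \;\leq\; \frac{\gamma^{K-q}}{\alpha(1-\gamma)}\, \|y\| .
\]

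Combining this with the earlier bound on $\|y\|$ in terms of $|\bigcup_{i\geq K} V_i|/N$, and applying Cauchy--Schwarz on the $q$-dimensional range of $P_q$ to convert the natural $\ell_1$-type estimate into the $\ell_2$ bound, then produces the factor $\sqrt{q}$ in front; rewriting $\gamma^{K-q} = e^{-\log(1/\gamma)(K-q)}$ gives the stated form. The main obstacle I expect is this last bookkeeping step: the conceptual content---geometric decay $\gamma^{K-q}$ from the chain shift, together with a priori bounds on the perturbation---is clean and essentially forced by the structure, but recovering the precise prefactor $\frac{2\sqrt{q}|\bigcup_{i\geq K}V_i|}{N}$ without an additional $\alpha$-dependence requires treating $\|y\|$ more delicately than the crude $\|y\|_1\leq 2$ estimate, using the identity $v_U = \frac{\alpha}{N} L_U\vec{\mathbf{1}}$ at the level of individual block contributions rather than as a single global norm bound.
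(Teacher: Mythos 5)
Your route is genuinely different from the paper's: the paper iterates the bipartition of Lemma~\ref{lem:basecase} along the chain to get the explicit product formula $v_W = S^{(1)}(I\oplus S^{(2)})\cdots(I\oplus S^{(M-1)})\bigl(\tfrac{m_1}{N}v_1\oplus\cdots\oplus\tfrac{m_M}{N}v_M\bigr)$, reads off the coordinates $\theta_k = \sum_{j\ge k} m_j \hat S^{(k)}\cdots\hat S^{(j-1)}v_j$, and observes that $\theta_r^W-\theta_r^U$ only contains terms with at least $K-r$ factors $\hat S$, each of norm $\le\gamma$, with $\|v_j\|\le\|v_j\|_1\le 1$. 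You instead use the resolvent identity $v_W-v_U=(1-\alpha)L_WB'v_U$ together with $L_W=(I-T)^{-1}L_\perp$ and the fact that the block-superdiagonal $T$ shifts supports down by one, so $P_qT^kL_\perp B'v_U=0$ for $k<K-q$. That localization mechanism is correct and is the same phenomenon as the paper's product of at least $K-q$ interaction matrices; the identity $L_W-L_U=(1-\alpha)L_WB'L_U$, the block structure of $T$, and the estimate $\|T^kz\|\le\gamma^k\|z\|$ all check out.

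There are, however, two concrete gaps. First, the justification "$\|L_{W_j}\|\le 1/\alpha$ because $(1-\alpha)W_j'$ has spectral radius at most $1-\alpha$" is wrong for the spectral norm: for non-normal matrices the resolvent norm is not controlled by the spectral radius. (Example: $W_j'=e_1\vec{\mathbf{1}}'$ gives $\|(I-(1-\alpha)W_j')^{-1}\|_2\sim\frac{1-\alpha}{\alpha}\sqrt{m_j}$.) This is repairable by working with the induced $1$-norm, where $\|L_{W_j}\|_1\le 1/\alpha$ does hold because the row sums of $W_j$ are at most $1$, and then using $\|\cdot\|_2\le\|\cdot\|_1$ for vectors. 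Second, and more seriously, the prefactor is not mere bookkeeping. Your route bounds everything by $\|B'v_U\|_1\le 2\sum_{i:\,f_i\in\cup_{l\ge K-1}V_l}(v_U)_i$, i.e., by the \emph{influence mass} of $v_U$ in the tail of the chain, whereas the theorem's prefactor is the \emph{node fraction} $|\cup_{i\ge K}V_i|/N$. In a directed chain the random walker drifts toward the tail, so the tail influence mass can be of order $1$ even when $|\cup_{i\ge K}V_i|/N$ is tiny; no refinement of the estimate on $\|y\|$ alone can convert one into the other. What the paper's decomposition buys is precisely that the tail enters only through the weights $m_j/N$ attached to the \emph{local} influence vectors $v_j$, each of norm at most $1$. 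To recover that from your side you would have to expand $v_U$ itself through the chain decomposition (essentially re-deriving the paper's iterated formula for $U$, and implicitly assuming $U$ inherits the weak-coupling structure), and even then you pick up extra factors of order $\frac{1-\alpha}{\alpha(1-\gamma)}$ relative to the stated constant. So your argument yields a correct bound of the form $C(\alpha,\gamma)\,e^{-\log(1/\gamma)(K-q)}$, which captures the exponential decay that is the theorem's main content, but it does not prove the inequality as stated.
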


\begin{proof}
Denote $m_i = |V_i|$. First, dividing $V$ into $V_1$ and $V_2 \cup \cdots \cup V_M$, Lemma \ref{lem:basecase} gives a sum decomposition of $v_W$,
\begin{align}
v_W & = S^{(1)} \bigg{(} \frac{m_1}{N} v_1 \oplus \frac{N-m_1}{N} v_{2,\ldots, M}  \bigg{)}  
\end{align}
where $v_{2,\ldots, M}$ is the influence vector of $G_2 \cup \cdots \cup G_M$ and $S^{(1)}$ is defined in equation (\ref{eq:Smatrix}). Iterating using partitions $V_i$ and $V_{i+1}\cup V_{i+2} \cup \cdots \cup V_{M}$ yields
\begin{align}\label{eq:induct}
v_W & = S^{(1)} (I_{m_1}\oplus S^{(2)}) \cdots (I_{N-m_{M-1} - m_{M}} \oplus S^{(M-1)}) \bigg{(} \frac{m_1}{N} v_1 \oplus \frac{m_2}{N} v_2 \oplus  \cdots \oplus \frac{m_M}{N} v_M  \bigg{)}  .
\end{align}
Now we consider $S^{(i)} : \mc{H}_{\set{i,\ldots, M}} \to  \mc{H}_{\set{i,\ldots, M}}$. By assumption on the direction of the graph, $S^{(i)}$ has a block matrix decomposition with respect to the direct sum  $\mc{H}_{\set{i,\ldots, M}} = \mc{H}_{i} \oplus \mc{H}_{\set{i+1,\ldots, M}}$,
\begin{align}\label{eq:weak-couple}
S^{(i)} = \begin{bmatrix} I_{m_i} & \hat{S}^{(i)} \\ 0 & I_{\mu_i}  \end{bmatrix}
\end{align}
where $\mu_i = \sum _{j > i} m_j$ and $\hat{S}^{(i)} = (1-\alpha)(I_{m_i} - (1-\alpha)W_i')^{-1}  A^{(i,i+1)}_1 .$ Applying equation (\ref{eq:weak-couple}) to (\ref{eq:induct}) gives
\begin{align}
v_W & = \frac{1}{N} ( \theta_1 \oplus \theta_2 \oplus \cdots \oplus \theta_M    )
\end{align}
where the new coordinates are,
\begin{align}\label{eq:newcoord}
\theta_k  = m_k v_k + m_{k+1}  \hat{S}^{(k)} v_{k+1} + m_{k+2}  \hat{S}^{(k)} \hat{S}^{(k+1)} v_{k+2} + \cdots + m_M \hat{S}^{(k)} \cdots \hat{S}^{(M-1)}  v_M.
\end{align}
Now, we observe that since the $p$-norm is decreasing for $p\geq 1$, 
\begin{align}\label{eq:unitvec}
\norm{v_j} \leq \norm{v_j}_1 \leq 1.
\end{align}
And by the weak coupling assumption, 
\begin{align}\label{eq:normS}
\norm{\hat{S}^{(j)}} \leq (1-\alpha)\max_i \norm{(I_{m_i} - (1-\alpha)W_i')^{-1} A_1^{(i,i+1)}} = \gamma . 
\end{align}
Now, let $\theta^W_i$ and $\theta^U_i$ denote the $i$-th components of $v_W$ and $v_U$, respectively. Suppose $r< K$. Then by the assumption that $U$ and $W$ only differ at the tail of the chain, we see that
\begin{align}
(1-\alpha)(I_{m_r} - (1-\alpha)W_r')^{-1}   = (1-\alpha)(I_{m_r} - (1-\alpha)U_r')^{-1} \textnormal{ and }W^{(r,r+1)} _{\operatorname{int}} =  U^{(r,r+1)}_{\operatorname{int}} .
\end{align}
This and equation (\ref{eq:newcoord}) imply that, for $r<K$, 
\begin{align}
\theta^W _r - \theta^U_r & = m _{K} \bigg{(} \hat{S}^{(r,W)}\cdots \hat{S}^{(K-1,W)} v_{K}^{(W)} - \hat{S}^{(r,U)}\cdots \hat{S}^{(K-1,U)} v_{K}^{(U)}\bigg{)}  +\cdots \nonumber \\ 
 & \hspace{30mm} \cdots +m _M \bigg{(} \hat{S}^{(r,W)}\cdots \hat{S}^{(M-1,W)} v_M^{(W)} - \hat{S}^{(r,U)}\cdots \hat{S}^{(M-1,U)} v_M^{(U)} \bigg{)} \nonumber \\
 &= \sum_{j=K}^M m_j \bigg\{\bigg(\prod_{i=r}^{K-1} {\hat S}^{(i,W)}\bigg)v_K^{(W)} - \bigg(\prod_{i=r}^{K-1} {\hat S}^{(i,U)}\bigg)v_K^{(U)}\bigg\},
\end{align}
where $\hat{S}^{(i,W)}$ and $\hat{S}^{(i,U)}$ are the interaction matrices in (\ref{eq:weak-couple}) for $W$ and $U$, respectively, and $v_i ^{(W)}$ and $v_i^{(U)}$ are the influence vectors for $V_i$ with interactions $W_i$ and $U_i$, respectively. Equations (\ref{eq:unitvec}) and (\ref{eq:normS}) imply 
\begin{align}
\norm{ \theta^W_r - \theta^U_r} \leq  \bigg{(}  \sum _{j=K}^{M} m_j \prod_{i=r}^{j-1} \norm{\hat{S}^{(i,W)} } + \sum _{j=K}^{M} m_j \prod_{i=r}^{j-1} \norm{\hat{S}^{(i,U)} }   \bigg{)}  \leq 2  \gamma^{ K - r} \sum _{j=K} ^{M} m_j . 
\end{align}
It follows that 
\begin{align}
\norm{ P_q ( v_W - v_U)}^{2} & = \frac{1}{N^2} \sum _{i=1}^q \norm{ \theta_i^W - \theta_i^U}^2 \nonumber \\
& \leq  \frac{4}{N^2} \sum _{i=1}^q \gamma ^{2(K-i)}  \bigg(\sum _{j=K}^{M} m_j \bigg) ^2 \nonumber \\
& \leq 4 \bigg{(} \frac{1}{N}\sum _{j=K}^M m_j \bigg{)} ^2 q \gamma^{2(K-q)}.
\end{align}
From this inequality we now obtain
\begin{align}
\norm{ P_q ( v_W - v_U)} & \leq  2\sqrt{q}  \bigg{(} \frac{1}{N} \sum_{j=K}^M m_j\bigg{)} e^{ - \log(\frac{1}{\gamma}  ) (K-q)},
\end{align}
which is the desired bound.
\end{proof}

We now combine this result with Theorem \ref{maxdeltacor} to give a description of how much data is required in a neighborhood of a node in a directed chain of firms to achieve an accurate measurement of systemic importance via the influence vector. For simplicity, we consider neighborhoods of the start of the chain. 


\begin{corollary}\label{cor:combined}
    Suppose that $\set{V_j}_{j=1}^M$ is a weakly coupled directed chain partition of $G$ with input-output matrix $W$, labor constant $\alpha$ and coupling constant $\gamma$. Let $U$ be an observed input-output matrix for $G$ such that for each firm in $V_1 \cup \cdots \cup V_{k+1}$, the missing data accounts for at most a $\delta_k$ share of their intermediate output use. Then, for any $f_i\in V_1$,

    \begin{align}\label{eq:coeff-bound}
        | \ip{ e_i, v_W } - \ip{e_i, v_U }| \leq \frac{(1-\alpha)(2\delta_k - \delta_k^2)}{\alpha(1-\delta_k)} + 2 e^{ - \log(1/\gamma) k} . 
    \end{align}
\end{corollary}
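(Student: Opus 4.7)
My plan is to interpolate between $W$ and $U$ by an intermediate input-output matrix $\tilde U$ that agrees with $W$ on the head of the chain and with $U$ on its tail, then bound the two resulting contributions separately using Theorems \ref{thm:finitecorrelation} and \ref{maxdeltacor}. Define $\tilde U$ by taking the row of $\tilde U$ corresponding to firm $f_r$ equal to the row of $W$ corresponding to $f_r$ when $f_r \in V_1 \cup \cdots \cup V_{k+1}$, and equal to the row of $U$ corresponding to $f_r$ when $f_r \in V_{k+2} \cup \cdots \cup V_M$. Each row of $\tilde U$ is a valid row of either $W$ or $U$, so $\tilde U$ is a row-stochastic input-output matrix, and the triangle inequality gives
\begin{align*}
|\langle e_i, v_W - v_U\rangle| \le |\langle e_i, v_W - v_{\tilde U}\rangle| + |\langle e_i, v_{\tilde U} - v_U\rangle|.
\end{align*}

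For the first term, I would apply Theorem \ref{thm:finitecorrelation} with $K=k+1$ and $q=1$. By construction, $W$ and $\tilde U$ agree on every row indexed by a firm in $V_1 \cup \cdots \cup V_{k+1}$, so any entry $(r,s)$ where $W$ and $\tilde U$ differ has $f_r \in V_{r'}$ for some $r'\ge k+2$. The directed-chain hypothesis (Definition \ref{def:directedchain}) forces such a row to have nonzero entries only in columns $s$ with $f_s \in V_{r'-1}\cup V_{r'}$, hence $f_s \in \bigcup_{l\ge k+1} V_l$; since the observed matrix $U$ can only rescale or zero out entries of $W$ within a row, its support is contained in that of $W$, so the same restriction on $f_s$ holds at every index where $\tilde U_{rs}\ne W_{rs}$. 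Thus the hypothesis of Theorem \ref{thm:finitecorrelation} is met, and since $f_i \in V_1$ implies $|\langle e_i, x\rangle| \le \|P_1 x\|$ by Cauchy--Schwarz, the theorem yields
\begin{align*}
|\langle e_i, v_W - v_{\tilde U}\rangle| \le \frac{2\,\bigl|\bigcup_{l=k+1}^{M} V_l\bigr|}{N}\, e^{-\log(1/\gamma)\, k} \le 2\, e^{-\log(1/\gamma)\, k}.
\end{align*}

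For the second term, $\tilde U$ and $U$ agree on every tail row, while each head row of $\tilde U$ (inherited from $W$) differs from the corresponding row of $U$ by at most a $\delta_k$ share of the intermediate input use of the associated firm, by hypothesis on $U$. Applying Theorem \ref{maxdeltacor} to the pair $(\tilde U, U)$ with $\delta = \delta_k$, together with $|\langle e_i, \cdot\rangle| \le \|\cdot\|_\infty \le \|\cdot\|_1$, gives
\begin{align*}
|\langle e_i, v_{\tilde U} - v_U\rangle| \le \frac{(1-\alpha)(2\delta_k - \delta_k^2)}{\alpha(1-\delta_k)},
\end{align*}
and summing the two bounds produces \eqref{eq:coeff-bound}. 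The only delicate point, and the main obstacle to be handled carefully, is that the proof of Theorem \ref{thm:finitecorrelation} bounds the iterated norms $\|\hat S^{(i,W)}\|$ and $\|\hat S^{(i,\tilde U)}\|$ uniformly by $\gamma$: the former holds by hypothesis, while for $\tilde U$ one either interprets $\gamma$ as a uniform coupling constant over both matrices or notes that a small row-wise perturbation preserves the weak-coupling inequality. Apart from this bookkeeping, the argument reduces to a clean two-term triangle inequality with no further estimates.
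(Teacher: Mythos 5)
Your proposal is correct and follows essentially the same route as the paper: the paper likewise introduces a hybrid matrix $\tilde{W}$ agreeing with $W$ on $V_1\cup\cdots\cup V_{k+1}$ and with $U$ beyond, splits by the triangle inequality, and bounds the two pieces with Theorem \ref{thm:finitecorrelation} (with $K=k+1$, $q=1$) and Theorem \ref{maxdeltacor} respectively, with the same implicit assumption you flag about the hybrid matrix inheriting the coupling constant. The only cosmetic difference is that the edge-direction convention of Definition \ref{def:directedchain} puts the nonzero columns of a row indexed in $V_{r'}$ inside $V_{r'}\cup V_{r'+1}$ rather than $V_{r'-1}\cup V_{r'}$, which only strengthens your containment in $\bigcup_{l\ge k+1}V_l$ and does not affect the argument.
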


\begin{proof}
    Denote
    \begin{align}
        \tilde{W} = \bigoplus_{i=1}^{k+1} W_i + \sum _{i=1}^{k} W_{\operatorname{int}}^{(i,i+1)} + U_{>k+1}, 
    \end{align}
    where 
    \begin{align}
        U_{>k+1} = \bigoplus _{i=k+2}^{M} U_i + \sum _{i=k+1}^{M-1} U_{\operatorname{int}}^{(i,i+1)}
    \end{align}
    and $W_i, U_i$ and $W_{\operatorname{int}}^{(i,i+1)}, U_{\operatorname{int}}^{(i,i+1)}$ are defined as in equation (\ref{eq:wi}).
    Then, if $P_1$ denotes the orthogonal projection onto $\mc{H}_{ V_1  }$,  
    \begin{align}
        | \ip{ e_i, v_W } - \ip{e_i, v_U }| & \norm{P_1 ( v_W - v_U) } \leq \norm{P_1(v_W - v_{\tilde{W}})} + \norm{ P_1(v_{\tilde{W}} - v_U) }.
    \end{align}
By Theorem \ref{thm:finitecorrelation}, 
\begin{align}\label{eq:lineone}
    \norm{P_1(v_W - v_{\tilde{W}})} \leq 2 e^{ - \log(1/\gamma) k}, 
\end{align}
and by Theorem \ref{maxdeltacor}, 
\begin{align}\label{eq:linetwo}
    \norm{ P_1(v_{\tilde{W}} - v_U) } \leq \frac{(1-\alpha)(2\delta_k - \delta_k^2)}{\alpha(1-\delta_k)}.
\end{align}
Combining (\ref{eq:lineone}) and (\ref{eq:linetwo}) gives the righthand side of (\ref{eq:coeff-bound}). 
\end{proof}

\section{A negative result: degrees of missing information}\label{s:missdegrees}

Given Theorem~\ref{thm:finitecorrelation}, it is natural to investigate how far off the influence vector can be when we only have data on nodes and connections that are within distance $k$ of some source node. In the next result, we construct a network $G$ with a source node $v$ and an observed network $H$ that is missing data on nodes and connections in $G$ which are not within $k$ steps of $v$, for which the $1$-norm of the difference between the influence vectors is $\Omega(1)$ as long as $k = o(n)$. Theorem \ref{thm:counterex-91} shows that some assumption on the structure of graph, such as the conditions in Definitions \ref{def:directedchain} and \ref{def:weakcoupling}, is necessary for locality results like Theorem \ref{thm:finitecorrelation}.  

\begin{theorem}\label{thm:counterex-91}
There exists a network $G$ of order $n > 3$ with a source node $v$ and an observed network $H$ that is obtained from $G$ by eliminating nodes and connections which are not within $k = o(n)$ steps of $v$, for which the $1$-norm of the difference between the influence vectors of $G$ and $H$ is $\Omega(1)$, where the constant in the bound depends on $\alpha$, but not on $n$ or $k$.
\end{theorem}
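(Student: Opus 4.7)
The plan is to make $G$ disconnected so that a large region of $G$ is unreachable from $v$, guaranteeing that $H$ discards it irrespective of $k$. Concretely, for any $n>3$, let $C_1=\{v,v'\}$ be a $2$-cycle with $w_{v,v'}=w_{v',v}=1$, and let $C_2$ be a uniformly weighted complete graph on a disjoint set of $n-2$ vertices with $w_{ij}=\tfrac{1}{n-3}$ for distinct $i,j\in C_2$. Set $G:=C_1\sqcup C_2$; then $W$ is row-stochastic with zero diagonal, and for every $u\in C_2$ we have $d_G(v,u)=\infty$, so for every $k\geq 1$ the observed network $H$ coincides with $C_1$, independent of $k$.

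The next step is to compute both influence vectors, exploiting the block-diagonal structure $W=W_1\oplus W_2$. For $C_1$, inverting the $2\times 2$ matrix $I_2-(1-\alpha)W_1'$ directly gives $(I_2-(1-\alpha)W_1')^{-1}\vec{\mathbf{1}}_2=\tfrac{1}{\alpha}\vec{\mathbf{1}}_2$. For $C_2$, writing $W_2=\tfrac{1}{n-3}(J-I)$ with $J$ the all-ones matrix reveals that $\vec{\mathbf{1}}_{n-2}$ is an eigenvector of $I-(1-\alpha)W_2'$ with eigenvalue exactly $\alpha$, so $(I-(1-\alpha)W_2')^{-1}\vec{\mathbf{1}}_{n-2}=\tfrac{1}{\alpha}\vec{\mathbf{1}}_{n-2}$. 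Combining the two blocks yields $v_G=\tfrac{1}{n}\vec{\mathbf{1}}_n$, and the same calculation applied inside the two-node graph $H$ gives $v_H=\tfrac{1}{2}\vec{\mathbf{1}}_2$.

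Finally, extending $v_H$ to $\mathbb{R}^n$ by assigning $0$ to each coordinate corresponding to a removed vertex of $C_2$, a direct computation shows
\begin{align*}
\norm{v_G-v_H}_1 \;=\; 2\left|\tfrac{1}{2}-\tfrac{1}{n}\right|+(n-2)\cdot\tfrac{1}{n} \;=\; \tfrac{2(n-2)}{n},
\end{align*}
which exceeds $1$ for every $n\geq 4$ and approaches $2$ as $n\to\infty$. The resulting lower bound is independent of $k$ and of $\alpha$, which is stronger than required. The main hurdle is conceptual rather than technical: one must settle the convention for comparing influence vectors on graphs of different sizes, and once zero-padding on the removed coordinates is adopted, disconnectedness of $G$ sidesteps any subtle dependence of $H$ on $k$ and the remainder is a short linear-algebra exercise.
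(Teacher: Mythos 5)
Your proof is correct, but it takes a genuinely different and deliberately degenerate route compared to the paper's. The paper builds a \emph{strongly connected} witness: in the limit of its perturbation parameter $b \to 0$, the network is a directed chain $v_n \to v_{n-1} \to \cdots \to v_1$ with $v_1$ pointing uniformly back to every other node, so the source $v = v_n$ can reach the entire graph and the radius-$k$ ball is a proper initial segment; it then solves the resulting linear recurrences to trap the true coordinates in $[\tfrac{\alpha}{n}, \tfrac{1}{\alpha n}]$ and the observed ones in $[\tfrac{\alpha}{k+1}, \tfrac{1}{\alpha(k+1)}]$, yielding an $\Omega(1)$ gap with constant depending on $\alpha$. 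You instead disconnect $G$ so that the unobserved part sits at infinite distance from $v$; since both blocks have unit column sums, $\vec{\mathbf{1}}$ is an eigenvector of $I-(1-\alpha)W'$ with eigenvalue $\alpha$ on each block, both influence vectors are uniform, and the bound $2 - \tfrac{4}{n}$ falls out immediately (and is even $\alpha$-free). Your computations check out, and your zero-padding convention is exactly the one the paper adopts when comparing vectors of different lengths, so the existential statement is satisfied. What the shortcut costs is the force of the negative result: the section's purpose is to show that structural hypotheses such as the directed-chain and weak-coupling conditions are \emph{necessary} for locality even when every node is reachable from the source, and a disconnected witness only certifies the near-tautology that unreachable mass goes unaccounted for. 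Both arguments ultimately exploit the same mechanism --- the observed influence vector places total mass $1$ on the $o(n)$ surviving coordinates while the true vector places only $o(1)$ mass there --- but only the paper's connected construction shows that reachability alone does not rescue locality. (A trivial nit: at $n = 4$ your bound equals $1$ rather than exceeding it, which is irrelevant for the $\Omega(1)$ claim.)
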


\begin{proof}
Fix some real number $b > 0$. Consider a network $G$ of order $n$ with vertices $v_1, v_2, \dots, v_n = v$ with $n \times n$ connectivity matrix $W$ which satisfies $w_{1 j} = \frac{1}{n-1}$ for $j > 1$, $w_{i j} = \frac{b}{n-2}$ for $i > 1$ and $j \not \in \left\{i-1,i\right\}$, $w_{i (i-1)} = 1-b$ for $i > 1$, and all other entries are zero. As $b \rightarrow 0$, Theorem~\ref{maxdeltacor} implies that the influence vector of $G$ converges to the influence vector of the network $J$ with connectivity matrix $X$ which satisfies $x_{1 j} = \frac{1}{n-1}$ for $j > 1$, $x_{i (i-1)} = 1$ for $i > 1$, and all other entries are zero. Let $v_X = (p_1, p_2, \dots, p_n)$ be the influence vector of the network $J$. By definition, we have \begin{align} p_n &= \frac{\alpha}{n}+\frac{1-\alpha}{n-1}p_1, \nonumber \\
 p_{n-i} &= \frac{\alpha}{n}+\frac{1-\alpha}{n-1}p_1+(1-\alpha)p_{n-i+1} \text{ for } 0 < i < n-1, \nonumber \\
 p_1 &= \frac{\alpha}{n}+(1-\alpha)p_2.
\end{align} Thus, we have \begin{align}
 p_{n-1}-p_n &= (1-\alpha)p_n, \nonumber \\
 p_{n-i}-p_{n-i+1} &= (1-\alpha)(p_{n-i+1}-p_{n-i+2}) \text{ for } 1 < i < n-1.
\end{align} The last equation is a homogeneous linear recurrence which has the solution \[p_{n-i} = \bigg(\frac{1}{\alpha}-\frac{(1-\alpha)^{i+1}}{\alpha}\bigg)p_n \text{ for } 0 \le i < n-1.\] Thus, \[p_1 = \frac{\alpha}{n}+(1-\alpha)p_2 = \frac{\alpha}{n}+(1-\alpha)\bigg(\frac{1}{\alpha}-\frac{(1-\alpha)^{n-1}}{\alpha}\bigg)p_n.\] Therefore, \[p_n = \frac{(1-\frac{\alpha}{n})\alpha^2}{-\alpha^2+(1-\alpha)^{n+1}+\alpha n + \alpha-1}.\] Note that $p_{n-i}$ is increasing with respect to $i$ for $0 \le i < n-1$, $p_1$ is a weighted average of $p_2$ and $\frac{1}{n}$, and $\sum_{t = 1}^n p_t = 1$. Thus, we must have $p_2 \ge p_1 \ge \frac{1}{n}$ and $p_n \le \frac{1}{n}$. Since $p_t > \frac{\alpha}{n}$ for all $t$ by definition of $G$ and $p_{n-i} \le \frac{p_n}{\alpha}$ for $0 < i < n-1$, we have \[p_t \in \bigg[\frac{\alpha}{n},\frac{1}{\alpha n}\bigg] \text{ for all } t \text{ with } 1 \le t \le n.\]

Now, consider the network $H$ obtained from $G$ by using only the nodes and connections that are within distance $k$ of vertex $v = v_n$. In particular, $H$ has the $(k+1) \times (k+1)$ connectivity matrix $U$ with rows and columns indexed by $n-k, n-k+1, \dots, n$ for which $u_{(n-k)j} = \frac{1}{k}$ for $j > n-k$, $u_{i j} = \frac{b}{(n-2)(1-b+\frac{(k-1)b}{n-2})}$ for $i > n-k$ and $j \not \in \left\{i-1,i\right\}$, $u_{i (i-1)} = \frac{1-b}{1-b+\frac{(k-1)b}{n-2}}$ for $i > n-k$, and all other entries are zero. As $b \rightarrow 0$, Theorem~\ref{maxdeltacor} implies that the influence vector of $H$ converges to the influence vector of the network $K$ which has connectivity matrix $Z$ with rows and columns indexed by $n-k, n-k+1, \dots, n$ and satisfies $z_{(n-k) j} = \frac{1}{k}$ for all $j > n-k$, $z_{i (i-1)} = 1$ for $i > n-k$, and all other entries are zero. 

Note that $Z$ is the same as $X$, except that every $n$ has been replaced with $k+1$ and the row and column indices start from $n-k$ in $Z$ rather than $1$. Thus, if $v_Z = (q_{n-k}, q_{n-k+1}, \dots, q_{n})$ denotes the influence vector of $Z$, then we have \[q_t \in \bigg[\frac{\alpha}{k+1},\frac{1}{\alpha (k+1)}\bigg] \text{ for all } t \text{ with } n-k \le t \le n.\]

In order to compare $v_X$ and $v_Z$, we must prepend $n-k-1$ zeroes to $v_Z$ so that both vectors have the same length. The first $n-k-1$ coordinates each differ by at least $\frac{\alpha}{n}$ between $v_X$ and $v_Z$, while the last $k+1$ coordinates each differ by at least $\frac{\alpha}{k+1}-\frac{1}{\alpha n}$ between $v_X$ and $v_Z$. Therefore, we have \[\norm{v_X-v_Z}_1 \ge (n-k-1)\frac{\alpha}{n}+(k+1)\bigg(\frac{\alpha}{k+1}-\frac{1}{\alpha n}\bigg) = \Omega(1) \text{ for } k = o(n),\] where the constants in the bound depend on $\alpha$, but not on $n$ or $k$. Thus, as $b \rightarrow 0$, we have $\norm{v_W-v_U}_1 \rightarrow \norm{v_X-v_Z}_1 = \Omega(1)$ for $k = o(n)$.
\end{proof}

\section{Implications for \texttt{PageRank}}\label{s:pagerank}

In addition to networks of input-output linkages between firms, the results in this paper can also be applied to networks of linkages between websites or citation networks. Given some collection of $n$ websites $s_1, \dots, s_n$, let $W$ be a weighted linkage network where the vertex set consists of websites. With probability $1-\alpha$, if the visitor is at website $s_i$, then they randomly select one of the websites linked from $s_i$ to visit next if $s_i$ has links, and they choose a new website uniformly at random if $s_i$ has no links. With probability $\alpha$, the visitor uniformly at random selects any website in the network to visit next, including staying at $s_i$. Specifically, in the $\alpha$ case the visitor goes to website $s_j$ with probability $\frac{1}{n}$ for each website $s_j$ in the network. Note that this $\alpha$ case prevents the random visitor from getting stuck at a website if it has no links. For this $\alpha$ case, we can think of the visitor typing in a url for the next website rather than clicking.

For website $s_i$, let $\outdeg(s_i)$ denote the number of websites to which website $s_i$ has a link. With probability $1-\alpha$, the visitor goes to some website which is linked from their current website, assuming that their current website has at least one link. Specifically, the visitor goes to website $s_j$ with probability $w_{i j}$, when $s_j$ is some website that can be clicked from website $s_i$. Here we take $w_{i j} = \frac{1}{\outdeg(s_i)}$ if $\outdeg(s_i) > 0$, but we could also let $w_{i j}$ vary with respect to the prominence of the link to $s_j$ on website $s_i$. When $\outdeg(s_i) = 0$, we let $w_{i j} = \frac{1}{n-1}$ for every $j \neq i$, i.e., the visitor goes to any website besides $s_i$ uniformly at random. For the two cases with probabilities $\alpha$ and $1-\alpha$, we can think of the visitor having a biased coin where one side lands with probability $\alpha$ and the other side lands with probability $1-\alpha$. We call $\alpha$ the \text{bias parameter}.

The \texttt{PageRank} vector $q_W$ corresponding to the network with linkage matrix $W$ and parameter $\alpha$ is defined the same as the influence vector \cite{pagerank}, i.e. $q_W = v_W$. The following results for \texttt{PageRank} are immediate implications of our main theorems for the influence vector. The first follows from Theorem~\ref{thmsharpdelta} and the second follows from Corollary~\ref{cor:combined}.

\begin{theorem}
Let $W$ be the true matrix of linkages with all of the data, and let $U$ be the observed matrix of the input-output linkages with data missing. Suppose that for each $i = 1, \dots, n$, the missing data accounts for at most a $\delta$ share of the links from each website $s_i$, where $0 \le \delta < 1$. In other words, at most $\delta \outdeg(s_i)$ of the links from $s_i$ are not crawled for each $i$. If $q_U$ denotes the \texttt{PageRank} vector calculated with the observed data and $q_W$ denotes the true \texttt{PageRank} vector, then the maximum possible value of $\norm{q_U-q_W}_p$ is $\Theta(\delta)$ for all $p \ge 1$, where the constants in the bound depend on $\alpha$.
\end{theorem}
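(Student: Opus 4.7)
The plan is to observe that the statement is essentially a reinterpretation of Theorem~\ref{thmsharpdelta} in the \texttt{PageRank} language, since the paper has already identified $q_W = v_W$ once the linkage matrix $W$ has been defined via the out-link structure. Thus both the upper and lower bounds should follow by translating the hypothesis "at most a $\delta$ share of the links from $s_i$ is missing" into "at most a $\delta$ share of the intermediate input use of firm $s_i$ is missing," and then invoking the corresponding influence-vector results.

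For the upper bound, I would make the correspondence explicit. Under the default weights $w_{ij} = 1/\outdeg(s_i)$, the total outgoing weight from $s_i$ is $1$ and removing $k \le \delta\,\outdeg(s_i)$ links removes exactly $k/\outdeg(s_i) \le \delta$ of this total before renormalization. The renormalized observed matrix $U$ therefore satisfies, row by row, the same hypothesis required by Theorem~\ref{maxdeltacor}, so $\norm{q_U - q_W}_1 \le \tfrac{\delta(1-\alpha)(2-\delta)}{\alpha(1-\delta)} = O(\delta)$. Since the $p$-norm is decreasing in $p$, the same $O(\delta)$ bound holds for every $p \ge 1$. The argument also goes through verbatim if the more general prominence-weighted \texttt{PageRank} (which Section~\ref{s:pagerank} explicitly permits) is used, because the input-output interpretation of the hypothesis is unchanged.

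For the lower bound, I would transplant the construction from the proof of Theorem~\ref{lower_bound_delta} to the \texttt{PageRank} setting. The weights $w_{1i} = w_{2i} = 1/(n-1)$ are already uniform link-weights, and the asymmetric weights $w_{j1} = 1-\delta$, $w_{j2} = \delta$ for $j > 2$ are permitted by the prominence-weighted \texttt{PageRank} framework of Section~\ref{s:pagerank}; alternatively, for $\delta = 1/m$ one can realize them by giving each $s_j$ with $j > 2$ exactly $m$ uniformly-weighted out-links, $m-1$ of them pointing (through multi-edges or auxiliary duplicates) to $s_1$ and the single "missing" one pointing to $s_2$. Either way the hypothesis on missing links is satisfied, and the computation of Theorem~\ref{lower_bound_delta} applies verbatim to give $\norm{q_W - q_U}_p = \Omega(\delta)$. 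Combining the two bounds yields $\norm{q_U - q_W}_p = \Theta(\delta)$. The only real subtlety is making sure the lower-bound construction is admissible as a \texttt{PageRank} network; this is handled by the paper's explicit allowance for weighted links (and by monotonicity in $\delta$ if one insists on the pure uniform model and integer $1/\delta$), so the argument is essentially immediate once the correspondence is set up.
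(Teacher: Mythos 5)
Your proposal is correct and follows essentially the same route as the paper, which simply observes that $q_W = v_W$ and cites Theorem~\ref{thmsharpdelta} (i.e.\ the combination of Theorem~\ref{maxdeltacor} for the upper bound and Theorem~\ref{lower_bound_delta} for the lower bound). You are in fact more careful than the paper in two places it glosses over: verifying that ``at most $\delta\outdeg(s_i)$ uncrawled links'' really does translate into the $\delta$-share-of-intermediate-input hypothesis after renormalization, and checking that the lower-bound construction is admissible as a \texttt{PageRank} network via the prominence-weighted links that Section~\ref{s:pagerank} explicitly permits.
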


\begin{theorem}
Suppose that $\set{V_j}_{j=1}^M$ is a weakly coupled directed chain partition of the website network $G$ with linkage matrix $W$, bias parameter $\alpha$ and coupling constant $\gamma$. Let $U$ be an observed linkage matrix for $G$ such that for each website $s_i$ in $V_1 \cup \cdots \cup V_{k+1}$, the missing data accounts for at most a $\delta_k$ share of the links from website $s_i$ (i.e. at most $\delta_k \outdeg(s_i)$ of the links from $s_i$ are not crawled). Then, for any $s_i\in V_1$,

    \begin{align}
        | \ip{ e_i, q_W } - \ip{e_i, q_U }| \leq \frac{(1-\alpha)(2\delta_k - \delta_k^2)}{\alpha(1-\delta_k)} + 2 e^{ - \log(1/\gamma) k} . 
    \end{align}
\end{theorem}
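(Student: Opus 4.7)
The plan is to observe that this theorem is a direct translation of Corollary~\ref{cor:combined} into the PageRank setting, so the proof reduces to checking that the three hypotheses of that corollary hold verbatim for website networks, then quoting it.

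First, I would note that the paper defines $q_W = v_W$ (and likewise $q_U = v_U$), so the quantity $|\langle e_i, q_W\rangle - \langle e_i, q_U\rangle|$ is literally the coefficient-wise error of the influence vector at node $s_i \in V_1$. Since $s_i \in V_1$, the unit vector $e_i$ lies in $\mathcal{H}_{V_1}$, so $|\langle e_i, q_W\rangle - \langle e_i, q_U\rangle| \le \|P_1(v_W - v_U)\|$ where $P_1$ is the orthogonal projection onto $\mathcal{H}_{V_1}$, matching the left-hand side of the corollary.

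Second, I would verify that the "share of links missing" hypothesis in the PageRank model is exactly a "share of intermediate input use missing" in the Cobb--Douglas framing used by Corollary~\ref{cor:combined}. In the website model the linkage weights are $w_{ij} = 1/\outdeg(s_i)$ on links (or $1/(n-1)$ when $\outdeg(s_i) = 0$), so if the crawler fails to observe at most $\delta_k \outdeg(s_i)$ of the links from $s_i$, the total weight of missing edges from $s_i$ is at most $\delta_k \outdeg(s_i) \cdot (1/\outdeg(s_i)) = \delta_k$ of the row sum, i.e.\ at most a $\delta_k$ share of the intermediate input use at node $s_i$. This is the precise hypothesis required by Corollary~\ref{cor:combined}, applied for each $s_i \in V_1 \cup \cdots \cup V_{k+1}$.

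Third, since $G$ is assumed to be a weakly coupled directed chain with the same labor/bias constant $\alpha$ and the same coupling constant $\gamma$ defined in \eqref{eq:cc}, all structural assumptions of Corollary~\ref{cor:combined} carry over unchanged. Applying that corollary yields
\[
|\langle e_i, q_W\rangle - \langle e_i, q_U\rangle| \le \|P_1(v_W - v_U)\| \le \frac{(1-\alpha)(2\delta_k - \delta_k^2)}{\alpha(1-\delta_k)} + 2 e^{-\log(1/\gamma)\, k},
\]
which is the stated bound. There is no substantive obstacle; the only thing to be careful about is the dictionary between "links from a website" and "intermediate input use of a firm," which is handled by the weight normalization above.
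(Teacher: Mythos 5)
Your proposal is correct and matches the paper's own treatment: the paper gives no separate proof, stating only that this theorem ``follows from Corollary~\ref{cor:combined},'' which is exactly the reduction you carry out. Your explicit check that a $\delta_k$ share of uncrawled links (each of weight $1/\outdeg(s_i)$) amounts to a $\delta_k$ share of missing row weight is the right dictionary and fills in the one detail the paper leaves implicit.
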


The following table illustrates analogues between \texttt{PageRank} and IO models.

\begin{center}
\begin{longtable}
{| p{3.0cm} || p{5.4cm} | p{7.0cm} |}
\hline
\textbf{Concept} & \textbf{\texttt{PageRank}} & \textbf{Input-Output} \\ 
\hline 
Vertices & Websites & Firms \\
\hline
Directed Edges & Links & Contributions between firms \\
\hline
Random walker & Stepping between websites & Stepping between firms \\
\hline
Step (bias $1-\alpha$) & Click link if one exists & Go to firm that contributes intermediate input if one exists \\
\hline
Teleport (bias $\alpha$) & Type \texttt{url} & Go to firm through laborer that uses all firms uniformly \\
\hline
Matrix $W$ & $w_{i j} = \frac{1}{\outdeg(\text{website i})}$ if website $i$ has link to website $j$, $0$ otherwise & $w_{i j} = $ share of firm $j$ in the intermediate input use of firm $i$ \\
\hline
Leontief inverse & Traffic propagation & Downstream propagation to other firms \\
\hline
Influence vector & Probability of being at website $i$ after many rounds & Probability of being at firm $i$ after many rounds \\
\hline
Influence vector error bound & Maximum possible error with $\delta$ fraction of links not crawled for website $i$ & Maximum possible error with $\delta$ share of data missing for intermediate input use of firm $i$ \\
\hline
Directed chain & Chain of linked websites & Supply chain \\
\hline
Weak coupling & Connectivity of adjacent linked groups of websites & Connectivity of adjacent dependent groups of firms \\
\hline
\end{longtable}
\end{center}

\section{Future work}\label{s:future}
Our underlying economic model has been the Cobb-Douglas model. While this has been a very useful model, it lacks certain flexibility since we are assuming that an industry's expenditure on various inputs as a fraction of its sales is invariant to the realization of shocks. We can alleviate this situation by relaxing this invariance and not having it hold for more general production technologies. This situation is much richer and also closer to real world scenarios, see e.g. [$\mathsection$ 2.3.1]\cite{premier}. 

One of the strengths of the approach outlined in this work is that the influence vector is independent of prices and is only concerned with changes in quantities and how they percolate through the economic network defined by the input-output matrix. As we move to more flexible functional forms for the production function, output prices become increasingly important and the influence vector may no longer be handled through linear algebra and may require more complicated polynomial expansions as approximations. The reality is that demand functions are likely to not have clean closed form solutions that can readily be manipulated in the way that we have done with the Leontief inverse matrix.

One such generalization is called the Constant Elasticity of Substitution (CES) model, see e.g. \cite{McFadden} and \cite{ArrowETal}, and it can be described as follows. Let us suppose that the production technology of firms in industry $i$ is now given by
\begin{align} \label{eq:ces}
    y_i = z_i \zeta_i l_i^{\alpha_i} \bigg(\sum_{j=1}^n a_{ij}^{1/\sigma_i} x_{ij}^{1-1/\sigma_i}\bigg)^{(1-\alpha_i)\sigma_i/(\sigma_i-1)},
\end{align}
where, once again, \eqref{eq:constantreturn} holds but now $\sigma_i$ denotes the elasticity of substitution between the multiple inputs. Unlike in the Cobb-Douglas model, the normalization constant is now chosen to be $\zeta_i = \alpha_i^{-\alpha_i} (1-\alpha_i)^{-(1-\alpha_i)\sigma_i/(\sigma_i-1)}$. This situation is clearly more general as we have an \textit{individual} elasticity $\sigma_i$ associated to the inputs. In order to recover \eqref{eq:cobbdouglas} from \eqref{eq:ces}, we need a careful singularity analysis as letting $\sigma_i \to 1$ for all $i$ will produce an undesirable $0/0$ situation. Indeed focusing, for simplicity, on $n=2$ yields
\begin{align} 
    y_i = z_i \zeta_i l_i^{\alpha_i} [a_{i1}^{1/\sigma_i} x_{i1}^{1-1/\sigma_i} + a_{i2}^{1/\sigma_i} x_{i2}^{1-1/\sigma_i}] ^{(1-\alpha_i)\sigma_i/(\sigma_i-1)}.
\end{align}
Here $a_{i2}=1-a_{i1}$. Taking logarithms on both sides yields
\begin{align} 
    \log y_i = \log(z_i \zeta_i \ell_i^{\alpha_i}) + \frac{(1-\alpha_i)\sigma_i}{\sigma_i-1} \log [a_{i1}^{1/\sigma_i} x_{i1}^{1-1/\sigma_i} + (1-a_{i1})^{1/\sigma_i} x_{i2}^{1-1/\sigma_i}].
\end{align}
If $\sigma_i = 1$, then the denominator in the fraction of the second term will tend to zero and the last term involving the logarithm will be of the form $\log[a_{i1}+1-a_{i1}]=\log 1 = 0$. Therefore, we invoke l'H\^{o}pital's rule so that as $\sigma_i \to 1$ we have
\begin{align}
    \log y_i = \log(z_i \zeta_i \ell_i^{\alpha_i}) + a_{i1}\log x_{i1} + (1-a_{i1})\log x_{i2} = \log(z_i \zeta_i \ell_i^{\alpha_i}) + a_{i1}\log x_{i1} + a_{i2}\log x_{i2}. \nonumber
\end{align}
Exponentiation on both sides yields
\begin{align}
    y_i 
    = z_i \zeta_i \ell_i^{\alpha_i} \prod_{j=1}^2 x_{ij}^{a_{ij}}. \nonumber
\end{align}
An induction argument can generalize the above to general $n \ge 1$ therefore yielding the baseline model \eqref{eq:cobbdouglas} as the limiting case when $\sigma_i \to 1$ for all $i$. 

In this situation, producing the explicit formula for the effect of a shock to industry $j$ on the output of industry $i$ is much more difficult. In the supplementary materials of \cite{premier}, Carvalho and Tahbaz-Salehi show that up to first order the effect of the shock is given by
\[
\frac{d\log y_i}{d \log z_i}\bigg|_{\log z = 0} = \ell_{ij} + \frac{1}{\lambda_i} \sum_{k=1}^n (\sigma_k-1)\lambda_k \bigg\{\sum_{r=1}^n a_{kr} \ell_{ri} \ell_{rj} - \frac{1}{1-\alpha_k}\bigg(\sum_{r=1}^n a_{kr}\ell_{ri}\bigg)\bigg(\sum_{r=1}^n a_{kr}\ell_{rj}\bigg)\bigg\}.
\]
The economy's Leontief inverse remains $L_A = (I-(1-\alpha)A)^{-1} = [[\ell_{ij}]]_{n \times n}$ and $\lambda_i$ denotes industry $i$'s Domar weight given by \eqref{eq:domar}. This expression admits a very useful interpretation:
\begin{enumerate}
    \item The first term is congruent with the baseline Cobb-Douglas model and it represents the downstream output effect we were seeing in the baseline model.
    \item The second term captures the reallocation effect.
\end{enumerate}
We can now see that the zeroth order model corresponds to the Cobb-Douglas models and that higher order terms will yield more accuracy and they will be interpreted as corrections to the baseline model. 
One direction for future work is to investigate more sophisticated measures of importance which pertain to the CES model. This would require a more careful analysis of the Leontief inverse and how it interacts with the parameters of the model.

\begin{center}
\begin{longtable}
{| p{3.0cm} || p{5.0cm} | p{7.4cm} |}
\hline
\textbf{Model name} & \textbf{Cobb-Douglas} & \textbf{Constant Elasticity of Substitution} \\ 
\hline 
Industry $i$ ouput & $y_i = z_i \zeta_i \ell_i^{\alpha_i} \prod_{j=1}^n x_{ij}^{a_{ij}}$ & $y_i = z_i \zeta_i l_i^{\alpha_i} \big(\sum_{j=1}^n a_{ij}^{1/\sigma_i} x_{ij}^{1-1/\sigma_i}\big)^{(1-\alpha_i)\sigma_i/(\sigma_i-1)}$ \\
\hline
Elasticity & $\sigma=1$ & $\{\sigma_i\}_{i=1}^n$ with $0 < \sigma_i < 1$\\
\hline
Limiting case & Not applicable & Letting $\sigma_i \to 1$ for all $i$ yields Cobb-Douglas. \\
\hline
Leontief inverse & $L_W = (I-(1-\alpha)W')^{-1} = [[\ell_{ij}]]_{1 \le i,j \le n}$ & $L_W = (I-(1-\alpha)W')^{-1} = [[\ell_{ij}]]_{1 \le i,j \le n}$ \\
\hline
Influence vector & ${\vec v}_W = \frac{\alpha}{n}[I-(1-\alpha)W']^{-1} \mathbf{1} $ & ${\vec v}_W = \frac{\alpha}{n}[I-(1-\alpha)W']^{-1} \mathbf{1} $ \\
\hline
Error bound\footnote{This is the worst-case error bound when $U$ is missing at most $\delta$ share of data for intermediate input use of each firm.} & $\norm{\vec v_U - \vec v_W}_p = \Theta(\delta)$ & $\norm{\vec v_U - \vec v_W}_p = \Theta(\delta)$ \\
\hline
\end{longtable}
\end{center}

In Section \ref{s:binomial}, we assumed that $y_{ij}= w_{ij} \text{II}_i \ge M$ for all $i,j$ for some universal constant $M$ (e.g. $M = \min_{i,j} \{ y_{ij} \}$). It would be interesting to explore the possibility of relaxing this condition to $\text{II}_i \ge M$ for another universal constant $M$ and obtaining stronger bounds. However, this may also involve a more complicated analysis. The choice of having the missing data be binomially distributed was natural as it is a flexible and fairly realistic distribution. Indeed, under certain conditions, in the limit as $y_{ij}$ tends to infinity, the binomial distribution can be used to approximate the normal distribution, reinforcing the choice of the binomial distribution for missing data. The key technology for obtaining Theorem \ref{thm:binomial} was Chernoff's concentration inequality. On the other hand, there are other equally reasonable choices depending on the economic landscape where the most realistic distribution might not necessarily be symmetric nor supported on the real line. In order to obtain satisfactory bounds in these situations, one may need to employ other concentration inequalities. 

\section{Acknowledgements}\label{s:acknowledgements}
The authors would like to thank Illya Hicks, Brandon Alston, Shannon Prier, and John Bordeaux for their helpful comments, support, and fruitful discussions. This paper was produced using support from the Homeland Security Research Division in extension of prior work supported by the Cybersecurity and Infrastructure Security Agency.

\end{document}